\documentclass[12pt]{article}
\usepackage{amsmath}
\usepackage{amsthm}
\usepackage{graphicx}
\usepackage{txfonts}
\usepackage[title]{appendix}
\usepackage{enumerate}
\usepackage[colorinlistoftodos]{todonotes}
\usepackage[a4paper, margin=1in]{geometry}
\usepackage{amssymb}
\usepackage{apacite}
\usepackage{listings}
\usepackage{relsize}

\usepackage{xcolor}
\usepackage[colorlinks = true,
            linkcolor = blue,
            urlcolor  = blue,
            citecolor = blue,
            anchorcolor = blue]{hyperref}
\usepackage{float}
\usepackage{bm}
\usepackage{natbib}
\usepackage{times}
\usepackage{caption}
\usepackage{subcaption}
\usepackage[bottom]{footmisc}
%\DeclareMathSizes{12}{12}{10}{7}
\usepackage[plain,noend]{algorithm2e}
\usepackage{float}
\newtheorem{theorem}{Theorem}
\numberwithin{theorem}{section}
\newtheorem{Lemma}{Lemma}
\numberwithin{Lemma}{section}

\numberwithin{Definition}{section}

\allowdisplaybreaks
\usepackage{fancyhdr}
\fancypagestyle{plain}{%
    \fancyhf{}%
    \fancyfoot[R]{\thepage}%
}
\setcounter{secnumdepth}{5}
\usepackage{color,soul}
\usepackage{array}
\newcolumntype{L}[1]{>{\centering}m{#1}}
%\usepackage{setspace}
%\doublespacing
\usepackage{float}
\floatstyle{plaintop}
\restylefloat{table}
\usepackage{multirow}
\usepackage{longtable}
%\numberwithin{equation}{section}
\setlength{\topmargin}{-1cm} \setlength{\textheight}{24.5cm}
\setlength{\oddsidemargin}{0.5cm} \setlength{\evensidemargin}{0.5cm}
\setlength{\textwidth}{15.5cm } \pagenumbering{arabic}

\setlength\parindent{0pt}
\newcommand{\norm}[1]{ \left\| #1 \right\| }

\begin{document}
\title{\large $U$-Statistics Based Jackknife Empirical Likelihood Tests for the Generalized Lorenz Curves}
\author{\small Suthakaran Ratnasingam$^{\footnote{Corresponding author. Email: suthakaran.ratnasingam@csusb.edu}}$, Anton Butenko\\
\small   Department of Mathematics \\ \small California State University, San Bernardino, San Bernardino, CA 92407, USA}
\date{}
\maketitle
\vspace{-0.5cm}
\begin{abstract}
\noindent
A Lorenz curve is a graphical representation of the distribution of income or wealth within a population. The generalized Lorenz curve can be created by scaling the values on the vertical axis of a Lorenz curve by the average output of the distribution. In this paper, we propose two non-parametric methods for testing the equality of two generalized Lorenz curves. Both methods are based on empirical likelihood and utilize a $U$-statistic. We derive the limiting distribution of the likelihood ratio, which is shown to follow a chi-squared distribution with one degree of freedom. We perform simulations to evaluate how well the proposed methods perform compared to an existing method, by examining their Type I error rates and power across different sample sizes and distribution assumptions. Our results show that the proposed methods exhibit superior performance in finite samples, particularly in small sample sizes, and are robust across various scenarios. Finally, we use real-world data to illustrate the methods of testing two generalized Lorenz curves.

\end{abstract}

\noindent
\textbf{Keywords:} generalized Lorenz curve; jackknife empirical likelihood; adjusted empirical likelihood; $U$-statistics; hypothesis testing

\section{Introduction}

A Lorenz curve is a visual representation of an income or wealth distribution within a population. It is named after American economist Max Lorenz (\cite{lor1905}) who developed it in 1905. The Lorenz curve is constructed by plotting the cumulative percentage of the population on the $x-$axis against the cumulative percentage of the variable (such as income or wealth) on the $y-$axis. The resulting curve represents the distribution of the variable in the population. A Lorenz curve that is close to the line of equality (which is a straight line that represents perfect equality) indicates that the distribution of the variable is relatively equal across the population. On the other hand, a Lorenz curve that is farther away from the line of equality indicates a higher degree of inequality in the distribution of the variable. Following \cite{gas1971}, a general definition of the Lorenz curve is given as
\begin{equation}\label{}
\begin{aligned}
\xi(t) = \dfrac{1}{\mu} \int_{0}^{\psi_{t}}x dF(x), \quad t  \in [0,1]
\end{aligned}
\end{equation}
where $\mu$ denotes the mean of $F$, and $\psi_{t} = F^{-1} (t)  = \inf \{x : F(x) \geq t\}$ is the $t-$th quantile of $F$. For a fixed $t \in [0, 1]$, the Lorenz ordinate $\xi(t)$ is the proportion of the cumulative income of the lowest $t$-th quantile of households. The generalized Lorenz curve can be constructed from a Lorenz curve by scaling the values on the vertical axis by the average output of the distribution. Similarly, the generalized Lorenz curve is defined by
\begin{equation}\label{}
\begin{aligned}
\eta(t) = \int_{0}^{\psi_{t}}x dF(x), \quad t  \in [0,1]
\end{aligned}
\end{equation}
where $\psi_{t}$ is the $t-$th quantile of $F$ as defined above. For a fixed $t \in [0, 1]$, the generalized Lorenz ordinate $\eta(t)$ is the average income of the lowest $t$-th quantile of households. While Lorenz curves are frequently used in economics to represent financial inequality, they also can be used in other fields of study to visualize the inequality of the distribution within any system. For example, the Lorenz curve has been used by several researchers to analyze physician distributions. \cite{chang1997} examined variations in the distribution of pediatricians among the states between 1982 and 1992 using Lorenz curves and Gini indices. \cite{kob1992} used the Lorenz curve and the Gini coefficient to study the disparity in physician distribution in Japan.\\

The empirical likelihood (EL) method, introduced by \cite{owen1988}, is a powerful nonparametric approach that offers numerous advantages over traditional methods. Unlike traditional methods, EL does not require strong assumptions to utilize the likelihood ratio approach, yet still preserves many of its desirable features, including Wilk's theorem, asymmetric confidence intervals, and better coverage for small sample sizes. However, the EL method has some computational difficulties when using nonlinear statistics, as demonstrated by \cite{jing2009}, and when constraints' solutions do not exist, as shown by \cite{chen2008}. Specifically, \cite{jing2009} showed that the EL-based approach loses its appeal when using nonlinear $U$-statistics with a degree of $m\geq{2}$ because of the computational difficulty of solving a system of nonlinear equations simultaneously using Lagrange multipliers. To address this issue, \cite{jing2009} proposed the jackknife empirical likelihood (JEL) approach. The JEL method turns the statistic of interest into a sample mean based on jackknife pseudo-values (\cite{que1956}), which are asymptotically independent under mild conditions (\cite{shi1984}). Then, Owen's EL method can be applied consecutively, resulting in a simpler system of equations. On the other hand, \cite{chen2008} pointed out that under certain conditions, it can be challenging to determine the parameter region over which the likelihood ratio function is well-defined. This makes it difficult to identify the maximum likelihood ratio or find a proper initial value. To tackle this challenge, \cite{chen2008} proposed the adjusted empirical likelihood (AEL) approach, which extends the convex hull to include the origin by adding a pseudo-value. With this adjustment, the empirical likelihood is well-defined for all parameter values, making it easier to find the maximum. \\

Multiple studies have been conducted on EL for the Lorenz curve by various researchers. For instance, \cite{belinga2007} and \cite{yang2012} developed plug-in empirical likelihood-based inferences to construct confidence intervals for the generalized Lorenz curve. Most recently, \cite{ratlorenz2023} developed three nonparametric EL-based methods to construct confidence intervals for the generalized Lorenz curve using adjusted empirical likelihood (AEL), transformed empirical likelihood (TEL), and transformed adjusted empirical likelihood (TAEL). Moreover, several studies have focused on comparing two Lorenz curves. For example, \cite{arora2006} investigated the generalized Lorenz dominance and proposed tests for the equality of two generalized Lorenz curves over a specified interval. \cite{li2018} noted that normal approximation-based methods may have poor performance, especially for the skewed income data, or the limiting distributions are nonstandard and bootstrap calibrations are needed hence more effective inferences for Lorenz curves are desirable. \cite{xu1997} proposed an asymptotically distribution-free statistical (ADF) test to evaluate the equality of two generalized Lorenz curves and showed that the test statistic follows the weighted sum of $\chi^{2}$ with different degrees of freedom. \\

As far as we know, there have been no previous studies examining the testing of the equality of two generalized Lorenz curves using EL methods. Therefore, this is the first study to investigate the equality of two generalized Lorenz curves using a nonparametric approach. We propose two novel nonparametric methods that employ a $U$-statistic based on the jackknife empirical likelihood (JEL) method and its extension to the adjusted jackknife empirical likelihood (AJEL). These methods combine two EL-based approaches, namely the JEL and AEL, previously discussed in \cite{jing2009} and \cite{chen2008}, respectively.\\

The remainder of the paper is structured as follows. In Section 2, we present two novel nonparametric techniques for testing the similarity between two generalized Lorenz curves. Section 3 describes the simulation studies carried out to evaluate the effectiveness of the proposed methods in different scenarios and to compare their performance with an existing method. In Section 4, we demonstrate the application of these methods to real datasets. Our findings are discussed in Section 5. All proofs are provided in the appendix.

\section{Main Results}
\sloppy In this section we develop two new testing procedures using jackknife EL methods. Let $X_{1}, X_{2}, \cdots, X_{n_{1}}$ and $Y_{1}, Y_{2}, \cdots, Y_{n_{2}}$ be two random samples from two independent populations. The generalized Lorenz curve for these two samples are
\begin{equation}\label{}
\begin{aligned}
\eta_{1}(t) = \int_{0}^{\psi_{t}}x dF(x), \quad t  \in [0,1]
\end{aligned}
\end{equation}
and
\begin{equation}\label{}
\begin{aligned}
\eta_{2}(t) = \int_{0}^{\psi_{t}}y dF(y), \quad t  \in [0,1]
\end{aligned}
\end{equation}
where $\psi_{t} = F^{-1} (t)  = \inf \{x : F(x) \geq t\}$ is the $t-$th quantile of $F$. We are interested in testing the following hypotheses.
\begin{equation}\label{}
\begin{aligned}
H_{0}: \eta_{1}(t) = \eta_{2}(t) \quad vs \quad H_{1}: \eta_{1}(t) \neq \eta_{2}(t) 
\end{aligned}
\end{equation} 
From the definition of the generalized Lorenz curve, it can be clearly seen that 
\begin{equation*}
\begin{aligned}
E[ X\,I(X \leq \psi_{t})] - \eta_{1}(t) = 0.
\end{aligned}
\end{equation*}
and
\begin{equation*}
\begin{aligned}
E[ Y\,I(Y \leq \psi_{t})] - \eta_{2}(t) = 0.
\end{aligned}
\end{equation*}
As a result, the generalized Lorenz ordinates $\eta_{1}(t)$ and $\eta_{2}(t)$  are the means of the random variable $X$ and $Y$ truncated at $\psi_{t}$ respectively. Let's consider the kernel function,
\begin{equation}\label{}
\begin{aligned}
h(X,Y)
= X\,I(X \leq \psi_{t})   - Y\,I(Y \leq \psi_{t}) 
\end{aligned}
\end{equation} 
We can easily show that $\theta(t) \equiv E\big[h(X_{i},Y_{j} \big] = \eta_{1}(t)  - \eta_{2}(t) $. Thus, we are interested in testing
\begin{equation}\label{hypo}
\begin{aligned}
H_{0}: \theta(t) = 0 \quad vs \quad H_{1}: \theta(t) \neq 0.
\end{aligned}
\end{equation} 
Now consider, the two-sample $U$-statistics of degree (1,1) with the kernel $h$ is given by,
\begin{equation}\label{}
\begin{aligned}
U_{n_{1}, n_{2}} &= \dfrac{1}{n_{1}}\dfrac{1}{n_{2}}\sum_{1\leq i \leq n_{1}}\sum_{1 \leq j \leq n_{2}}h(X_{i},Y_{j}) \\
 &= \dfrac{1}{n_{1}}\dfrac{1}{n_{2}} \sum_{1\leq i \leq n_{1}}\sum_{1 \leq j \leq n_{2}} X_{i}\,I(X_{i} \leq \psi_{t})   - Y_{j}\, I(Y_{j}\leq \psi_{t}) 
\end{aligned}
\end{equation} 
Let $n = n_{1}+n_{2}$. We can write the $U$-statistics 
\begin{equation}\label{}
\begin{aligned}
U_{n_{1}, n_{2}}(X_{1},\dots, X_{n_{1}},Y_{1},\dots,Y_{n_{2}}) = U_{n}(Z_{1},Z_{2},\dots, Z_{n})
\end{aligned}
\end{equation} 
where 
\[ Z_{k} = \begin{cases} 
      X_{k} & k = 1,2,\dots, n_{1} \\
      Y_{k-n_{1}} & k = n_{1}+1,\dots, n
   \end{cases}
\]

We define the corresponding jackknife pseudo-values by 
\begin{equation}\label{eq10}
\begin{aligned}
\widehat{V}_{k} = nU_{n} - (n-1)U_{n-1}^{-k}, \quad k = 1,2,\cdots n,
\end{aligned}
\end{equation} 
where $U_{n-1}^{-k} = U_{n}\big(Z_{1},Z_{2},\dots, Z_{k-1},Z_{k+1},\dots, Z_{n}\big)$. Further, the jackknife estimator of $\theta$ is $n^{-1} \sum_{i=1}^{n} \widehat{V}_{i}$. In particular, under mild conditions, the $\widehat{V}_{k}$'s are asymptotically independent. For more details, readers are referred to \cite{shi1984}. Thus, we can use the EL approach to the $\widehat{V}_{k}$'s. It should be noted that $\widehat{V}_{k}(t)$ is the function of $t$ and can be calculated at a fixed value $t_{0}$ such that $t_{0} \in [0,1]$. For the simplicity of notations, we use $\widehat{V}_{k}$ instead of $\widehat{V}_{k}(t)$.
The JEL for $\theta(t)$ is defined as follows:
\begin{equation}\label{}
\begin{aligned}
L(\theta(t)) = \sup_{\mathbf{p}}\bigg\{\prod_{k=1}^{n}p_{k}: \sum_{k=1}^{n}p_{k}=1, \sum_{k=1}^{n}p_{k} \big(\widehat{V}_{k} - \mathbf{E}\widehat{V}_{k}\big) = 0 \bigg\},
\end{aligned}
\end{equation} 

where $\mathbf{p} = (p_{1}, p_{2}, \dots, p_{n})$ is a probability vector satisfying $\sum_{k=1}^{n}p_{k} = 1$ and $p \geq 0$ for all $k$, and $\mathbf{E}\widehat{V}_{k}$ can be determined using the equation (14) in \cite{jing2017}. Note that $\prod_{k=1}^{n} p_{k}$, subject to $\sum_{k=1}^{n}p_{k} = 1$, attains its maximum $n^{-n}$ at $p_{k} = n^{-1}$. Thus, the JEL ratio for $\theta(t)$ is given as
\begin{equation}\label{}
\begin{aligned}
\mathcal{R} (\theta (t)) = \sup \bigg\{\prod_{k=1}^{n}np_{k}: \sum_{k=1}^{n}p_{k}=1, \sum_{k=1}^{n}p_{k} \big(\widehat{V}_{k} - \mathbf{E}\widehat{V}_{k}\big) = 0 \bigg\}
\end{aligned}
\end{equation}
Further, under null hypothesis $H_{0}: \theta(t) = 0$, the JEL ratio becomes, 
\begin{equation}\label{}
\begin{aligned}
\mathcal{R} (0) = \sup \bigg\{\prod_{k=1}^{n}np_{k}: \sum_{k=1}^{n}p_{k}=1, \sum_{k=1}^{n}p_{k} \widehat{V}_{k} = 0 \bigg\}.
\end{aligned}
\end{equation}

Using the Lagrange multiplier method, we have
\begin{equation*}
\begin{aligned}
p_{k} = \dfrac{1}{n} \bigg\{ 1 + \lambda \widehat{V}_{k}\bigg\}^{-1}, \quad k = 1,\dots, n.
\end{aligned}
\end{equation*}
where $\lambda$ is the solution to
\begin{equation*}
\begin{aligned}
\dfrac{1}{n} \sum_{k=1}^{n} \dfrac{\widehat{V}_{k}}{1 + \lambda \widehat{V}_{k}} = 0.
\end{aligned}
\end{equation*}

Hence, the profile jackknife empirical log-likelihood ratio for $\theta(t)$ becomes
\begin{equation}\label{eqaug3}
\begin{aligned}
\ell(\theta(t)) = -2 \log \mathcal{R} (\theta (t)) = 2 \sum_{k=1}^{n}\log \big\{1 + \lambda \widehat{V}_{k} \big\}.
\end{aligned}
\end{equation}
Let $h_{1,0}(x) = \mathbf{E}h(x,Y_{1}),~~ \sigma_{1,0}^{2} = Var\big(h_{1,0}(X_{1})\big),~~ h_{0,1}(y) = \mathbf{E}h(X_{1},y),$ and $\sigma_{0,1}^{2} = Var\big(h_{0,1}(Y_{1})\big)$. We have the following theorem for the JEL.

\begin{theorem}\label{thm1}
Assume that 
\begin{enumerate}
    \item $E(X^{2})<\infty$, and $E(Y^{2})<\infty$
    \item $\mathbf{E}h^{2}(X_{1},Y_{1})< \infty$, $\sigma_{1,0}^{2}>0$, and $\sigma_{0,1}^{2} > 0$
    \item $n_{1}/n_{2} \longrightarrow r$, where $0<r<\infty$
\end{enumerate}
For any given $t = t_{0} \in (0,1)$, the limiting distribution of $\ell(\theta(t_{0}))$  defined by (\ref{eqaug3}) is a chi-square distribution with one degree of freedom,
\begin{equation}\label{}
\begin{aligned}
\ell(\theta(t_{0})) \longrightarrow \chi^{2}_{1}, \quad \text{as}\,\, \min(n_{1}, n_{2}) \longrightarrow \infty.
\end{aligned}
\end{equation}
\end{theorem}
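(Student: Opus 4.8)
The plan is to reduce the claim to Owen's classical empirical-likelihood expansion, once three properties of the jackknife pseudo-values $\widehat{V}_{1},\dots,\widehat{V}_{n}$ are in hand: (i) a central limit theorem for the average $\bar{V}:=n^{-1}\sum_{k=1}^{n}\widehat{V}_{k}$; (ii) convergence in probability of the empirical variance $S_{n}^{2}:=n^{-1}\sum_{k=1}^{n}(\widehat{V}_{k}-\bar{V})^{2}$ to the limiting variance; and (iii) the negligibility bound $\max_{1\le k\le n}|\widehat{V}_{k}|=o_{p}(\sqrt{n})$. Throughout we work under $H_{0}$, so $\theta(t_{0})=0$, the constraint reduces to $\sum_{k=1}^{n}p_{k}\widehat{V}_{k}=0$, and $\lambda$ solves $n^{-1}\sum_{k=1}^{n}\widehat{V}_{k}(1+\lambda\widehat{V}_{k})^{-1}=0$. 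Given (i)--(iii), the rest is standard: a maximal-inequality argument gives $|\lambda|=O_{p}(n^{-1/2})$; a Taylor expansion of the estimating equation yields $\lambda=\bar{V}/S_{n}^{2}+o_{p}(n^{-1/2})$; and substituting into $\ell(\theta(t_{0}))=2\sum_{k=1}^{n}\log(1+\lambda\widehat{V}_{k})=2\lambda\sum_{k=1}^{n}\widehat{V}_{k}-\lambda^{2}\sum_{k=1}^{n}\widehat{V}_{k}^{2}+o_{p}(1)$ gives $\ell(\theta(t_{0}))=n\bar{V}^{2}/S_{n}^{2}+o_{p}(1)$, whence the $\chi^{2}_{1}$ limit follows from Slutsky's theorem and the continuous mapping theorem.

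For (i) I would use the elementary identity $n^{-1}\sum_{k=1}^{n}U_{n-1}^{-k}=U_{n}$, valid for any $U$-statistic, which makes $\bar{V}=U_{n}=U_{n_{1},n_{2}}$ exactly, so the pseudo-value mean inherits the limit theory of the two-sample $U$-statistic. By the Hoeffding projection for two-sample $U$-statistics of degree $(1,1)$, Assumptions 1--3 give $\sqrt{n}\,(U_{n_{1},n_{2}}-\theta(t_{0}))\xrightarrow{d}N(0,\sigma^{2}(t_{0}))$ with $\sigma^{2}(t_{0})=\tfrac{1+r}{r}\sigma_{1,0}^{2}+(1+r)\sigma_{0,1}^{2}>0$, using $n/n_{1}\to(1+r)/r$ and $n/n_{2}\to 1+r$; under $H_{0}$ this reads $\sqrt{n}\,\bar{V}\xrightarrow{d}N(0,\sigma^{2}(t_{0}))$. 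Assumption 2 supplies both the CLT and the nondegeneracy of the limit, while $t_{0}\in(0,1)$ keeps the truncation at $\psi_{t_{0}}$ nontrivial (we treat $\psi_{t_{0}}$ as fixed, as in the statement; replacing it by a consistent sample quantile would introduce one further linearization term of the same order).

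For (ii) and (iii) the key step is a first-order expansion of each pseudo-value. Writing $U_{n}-U_{n-1}^{-k}$ explicitly for the degree-$(1,1)$ kernel shows that, for $Z_{k}$ from the first sample, $\widehat{V}_{k}=\theta(t_{0})+\tfrac{1+r}{r}\bigl(h_{1,0}(X_{k})-\theta(t_{0})\bigr)+r_{k,n}$, and for $Z_{k}$ from the second sample $\widehat{V}_{k}=\theta(t_{0})+(1+r)\bigl(h_{0,1}(Y_{k-n_{1}})-\theta(t_{0})\bigr)+r_{k,n}$, with remainders satisfying $\max_{1\le k\le n}\mathbf{E}r_{k,n}^{2}\to0$; this uses the Hoeffding decomposition of $U_{n_{1},n_{2}}$ together with $\mathbf{E}h^{2}(X_{1},Y_{1})<\infty$, and for the present separable kernel $h(x,y)=x\,I(x\le\psi_{t_{0}})-y\,I(y\le\psi_{t_{0}})$ the fully degenerate part of the decomposition vanishes identically, which simplifies matters. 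Consequently $S_{n}^{2}$ equals, up to $o_{p}(1)$, the pooled empirical variance of the influence terms, which converges in probability to $\sigma^{2}(t_{0})$ by the law of large numbers, giving (ii); and since $\widehat{V}_{k}$ is represented as a bounded-$L^{2}$ influence term plus a uniformly $L^{2}$-small remainder, the standard maximal-term argument gives $n^{-1/2}\max_{1\le k\le n}|\widehat{V}_{k}|\to0$, which is (iii). I expect the uniform-in-$k$ control of the remainders $r_{k,n}$, and the moment bookkeeping it requires, to be the main obstacle; once (i)--(iii) are established, the passage to the $\chi^{2}_{1}$ limit is Owen's lemma applied verbatim to the array $\{\widehat{V}_{k}\}$.
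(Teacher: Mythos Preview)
Your proposal is correct and follows essentially the same route as the paper: both reduce the claim to the two-sample JEL framework of Jing, Yuan, and Zhou (2009), with the $\chi^{2}_{1}$ limit obtained via Owen's three-ingredient argument (CLT for $\bar V$, consistency of $S_{n}^{2}$, and the maximal bound). The paper is in fact terser than you are---after writing out the pseudo-values explicitly and verifying $E\widehat{V}_{k}=0$ under $H_{0}$, it simply says ``following the similar arguments given in Jing et al.\ (2009)'' and omits the expansion you spell out; your identity $\bar V=U_{n}$, the asymptotic form $\widehat V_{k}\approx \tfrac{1+r}{r}h_{1,0}(X_{k})$ (resp.\ $(1+r)h_{0,1}(Y_{k-n_{1}})$), and your observation that the separable kernel kills the degenerate part are all correct refinements that the paper leaves implicit.
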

\begin{proof}
Proof of Theorem \ref{thm1} is given in Appendix.
\end{proof}

Further, \cite{chen2008} proposed the AEL method by adding a pseudo-observation to the data set. This method bypasses the convex hull constraint and ensures a solution at any parameter point. By adopting \cite{chen2008}'s idea, we extend the proposed JEL method by employing the adjusted jackknife empirical likelihood (AJEL) to examine the equality of two generalized Lorenz curves. The AJEL for $\theta(t)$ is defined as follows:
\begin{equation}\label{}
\begin{aligned}
L^{\textnormal{Adj}}(\theta(t)) = \sup_{\mathbf{p}}\bigg\{\prod_{k=1}^{n+1}p^{\textnormal{Adj}}_{k}: \sum_{k=1}^{n+1}p^{\textnormal{Adj}}_{k}=1, \sum_{k=1}^{n+1}p^{\textnormal{Adj}}_{k} g^{\textnormal{Adj}}_{k}(t) = 0 \bigg\},
\end{aligned}
\end{equation} 
where $g^{\textnormal{Adj}}_{k}(t) = \widehat{V}_{k} - \mathbf{E}\widehat{V}_{k},\,\, k = 1,\dots,n$, and $g^{\textnormal{Adj}}_{n+1} =-a_n \Bar{g}_{n}(t) = -\frac{a_n}{n} \sum_{i=1}^{n}g_{i}(t)$. As recommended by \cite{chen2008}, $a_{n}= \max\{1, \log(n)/2\}$. Using the Lagrange multiplier method, we can determine $L^{\textnormal{Adj}}(\theta(t))$ as follows.
\begin{equation*}
\begin{aligned}
p^{\textnormal{Adj}}_{k} = \dfrac{1}{n+1} \bigg\{ 1 + \lambda^{\textnormal{Adj}}(t) g^{\textnormal{Adj}}_{k}(t)\bigg\}^{-1}, \quad k = 1,\dots, n+1.
\end{aligned}
\end{equation*}
where $\lambda^{\textnormal{Adj}}$ is the solution to
\begin{equation*}
\begin{aligned}
\dfrac{1}{n+1} \sum_{k=1}^{n+1} \dfrac{g^{\textnormal{Adj}}_{k}(t)}{1 + \lambda^{\textnormal{Adj}}(t) g^{\textnormal{Adj}}_{k}(t)} = 0.
\end{aligned}
\end{equation*}
Note that $\prod_{k=1}^{n+1} p^{\textnormal{Adj}}_{k}$, subject to $\sum_{k=1}^{n+1}p^{\textnormal{Adj}}_{k} = 1$, attains its maximum $(n+1)^{-n-1}$ at $p_{k} = (n+1)^{-1}$. Thus, the AJEL ratio for $\theta(t)$ is given as
\begin{equation}\label{}
\begin{aligned}
\mathcal{R}^{\textnormal{Adj}} (\theta (t)) = \prod_{k=1}^{n+1}(n+1)p^{\textnormal{Adj}}_{k} = \prod_{k=1}^{n+1} \big\{1 + \lambda^{\textnormal{Adj}}(t) g^{\textnormal{Adj}}_{k}(t) \big\}^{-1}
\end{aligned}
\end{equation}
Hence, the profile-adjusted jackknife empirical log-likelihood ratio for $\theta(t)$ is
\begin{equation}\label{eqaug3022}
\begin{aligned}
\ell^{\textnormal{Adj}}(\theta(t)) = -2 \log \mathcal{R}^{\textnormal{Adj}} (\theta (t)) = 2 \sum_{k=1}^{n+1}\log \big\{1 + \lambda^{\textnormal{Adj}}(t) g^{\textnormal{Adj}}_{k}(t) \big\}.
\end{aligned}
\end{equation}

\begin{theorem}\label{thm2}
Under the same conditions of Theorem \ref{thm1} and for any given $t = t_{0} \in (0,1)$, the limiting distribution of $\ell^{\textnormal{Adj}}(\theta(t_{0}))$ defined by (\ref{eqaug3022}) is a chi-square distribution with one degree of freedom,
\begin{equation}\label{}
\begin{aligned}
\ell^{\textnormal{Adj}}(\theta(t_{0})) \longrightarrow \chi^{2}_{1}, \quad \text{as}\,\, n \longrightarrow \infty.
\end{aligned}
\end{equation}
\end{theorem}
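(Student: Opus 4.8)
The plan is to run the proof of Theorem~\ref{thm1} essentially verbatim on the augmented data set $\{g^{\textnormal{Adj}}_{1},\dots,g^{\textnormal{Adj}}_{n+1}\}$, and to show that the single extra pseudo-value $g^{\textnormal{Adj}}_{n+1}=-a_{n}\bar g_{n}(t_{0})$ perturbs every quantity entering the asymptotic expansion by only an $o_{p}$ amount, so that $\ell^{\textnormal{Adj}}(\theta(t_{0}))$ in (\ref{eqaug3022}) and $\ell(\theta(t_{0}))$ in (\ref{eqaug3}) have the same $\chi^{2}_{1}$ limit. Write $g_{k}=\widehat V_{k}-\mathbf{E}\widehat V_{k}$ for $k\le n$, and set $\bar g_{n}=n^{-1}\sum_{k=1}^{n}g_{k}$ and $S_{n}=n^{-1}\sum_{k=1}^{n}g_{k}^{2}$. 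From the proof of Theorem~\ref{thm1} we may take as given, under conditions (1)--(3): (i) $\sqrt{n}\,\bar g_{n}\xrightarrow{d}N(0,\sigma^{2})$ with $\sigma^{2}=\tfrac{1+r}{r}\sigma_{1,0}^{2}+(1+r)\sigma_{0,1}^{2}>0$ (the asymptotic variance of the jackknife estimator of $\theta$, which coincides with that of the degree-$(1,1)$ $U$-statistic $U_{n_{1},n_{2}}$); (ii) $S_{n}\xrightarrow{p}\sigma^{2}$; and (iii) $\max_{1\le k\le n}|g_{k}|=o_{p}(n^{1/2})$, these being consequences of the $L^{2}$ theory of two-sample $U$-statistics together with the asymptotic independence of the jackknife pseudo-values (\cite{shi1984}).

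\textbf{Step 1 (moments of the augmented sample).} Since $\sum_{k=1}^{n+1}g^{\textnormal{Adj}}_{k}=\sum_{k=1}^{n}g_{k}-a_{n}\bar g_{n}=(n-a_{n})\bar g_{n}$, the augmented mean is $\bar g^{\textnormal{Adj}}_{n+1}:=(n+1)^{-1}\sum_{k=1}^{n+1}g^{\textnormal{Adj}}_{k}=\tfrac{n-a_{n}}{n+1}\bar g_{n}$, so that $\sqrt{n}\,\bar g^{\textnormal{Adj}}_{n+1}=(1+o(1))\sqrt{n}\,\bar g_{n}\xrightarrow{d}N(0,\sigma^{2})$. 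Because $a_{n}=\max\{1,\log(n)/2\}=O(\log n)$ and $\bar g_{n}=O_{p}(n^{-1/2})$, the extra squared term is $a_{n}^{2}\bar g_{n}^{2}=O_{p}(n^{-1}\log^{2}n)=o_{p}(1)$, hence $S^{\textnormal{Adj}}_{n+1}:=(n+1)^{-1}\sum_{k=1}^{n+1}(g^{\textnormal{Adj}}_{k})^{2}=S_{n}+o_{p}(1)\xrightarrow{p}\sigma^{2}$. Similarly $|g^{\textnormal{Adj}}_{n+1}|=a_{n}|\bar g_{n}|=O_{p}(n^{-1/2}\log n)=o_{p}(n^{1/2})$, so $\max_{1\le k\le n+1}|g^{\textnormal{Adj}}_{k}|=o_{p}(n^{1/2})$. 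Finally, since $-a_{n}\bar g_{n}$ and $\bar g_{n}$ lie on opposite sides of the origin, $0$ is in the interior of the convex hull of $\{g^{\textnormal{Adj}}_{1},\dots,g^{\textnormal{Adj}}_{n+1}\}$, so the Lagrange multiplier $\lambda^{\textnormal{Adj}}(t_{0})$ exists and is unique --- this is precisely the purpose of the adjustment (\cite{chen2008}).

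\textbf{Step 2 (expansion and conclusion).} Armed with the three displayed facts of Step~1, the standard Owen-type argument (as used for Theorem~\ref{thm1}) yields $\lambda^{\textnormal{Adj}}(t_{0})=O_{p}(n^{-1/2})$ and $\max_{k}|\lambda^{\textnormal{Adj}}(t_{0})g^{\textnormal{Adj}}_{k}|=o_{p}(1)$; expanding $2\sum_{k=1}^{n+1}\log\{1+\lambda^{\textnormal{Adj}}(t_{0})g^{\textnormal{Adj}}_{k}\}$ to second order then gives
\[
\ell^{\textnormal{Adj}}(\theta(t_{0}))=(n+1)\,\frac{\big(\bar g^{\textnormal{Adj}}_{n+1}\big)^{2}}{S^{\textnormal{Adj}}_{n+1}}+o_{p}(1).
\]
Substituting $\bar g^{\textnormal{Adj}}_{n+1}=\tfrac{n-a_{n}}{n+1}\bar g_{n}$ and $S^{\textnormal{Adj}}_{n+1}=\sigma^{2}+o_{p}(1)$, and noting $\tfrac{(n-a_{n})^{2}}{n(n+1)}\to1$, we obtain $\ell^{\textnormal{Adj}}(\theta(t_{0}))=n\,\bar g_{n}^{2}/\sigma^{2}+o_{p}(1)$, and fact (i) together with Slutsky's theorem gives $\ell^{\textnormal{Adj}}(\theta(t_{0}))\xrightarrow{d}\chi^{2}_{1}$.

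\textbf{Expected main obstacle.} No step here is deep: everything is inherited from the proof of Theorem~\ref{thm1} plus the elementary bookkeeping of Step~1. The only point genuinely requiring care is making the $O_{p}$ and $o_{p}$ controls in Step~2 uniform in the presence of the extra, sample-size-dependent summand $g^{\textnormal{Adj}}_{n+1}$ --- in particular, re-deriving $\lambda^{\textnormal{Adj}}(t_{0})=O_{p}(n^{-1/2})$ when one of the $n+1$ estimating functions is not part of the i.i.d.-like block. This is exactly the technical content of \cite{chen2008}; in the write-up I would invoke their lemmas, adapted to the jackknife pseudo-values (whose requisite asymptotic independence and moment behaviour are supplied by \cite{shi1984} and conditions (1)--(3)), rather than reproduce those estimates.
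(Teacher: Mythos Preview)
Your proposal is correct and follows essentially the same approach as the paper: both adapt the argument of \cite{chen2008} to the jackknife pseudo-values, establishing that the extra pseudo-observation $g^{\textnormal{Adj}}_{n+1}=-a_{n}\bar g_{n}$ perturbs the mean, second moment, and maximum only by $o_{p}$ amounts, whence $\lambda^{\textnormal{Adj}}=O_{p}(n^{-1/2})$ and the second-order Taylor expansion yields $n\bar g_{n}^{2}/\sigma^{2}+o_{p}(1)\xrightarrow{d}\chi^{2}_{1}$. The paper writes out the inequality chain for $\lambda^{\textnormal{Adj}}=O_{p}(n^{-1/2})$ explicitly (your ``expected main obstacle''), whereas you defer to \cite{chen2008}; otherwise the arguments coincide.
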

\begin{proof}
Proof of Theorem \ref{thm2} is given in Appendix.
\end{proof}

\section{Simulation Study}
In this section, we conduct a simulation study to evaluate the performance of the proposed testing methods. In our simulation analysis, we take into account Chi-Square, Exponential, and Half-Normal distributions as the overall distribution function $F(x)$ because the majority of income distributions are positively skewed. Under the null hypothesis, we examine the distributions of $\chi^{2}_{4},~Exp(4)$, and $HN(1)$ using different sample sizes $(n_{1}, n_{2})$ such as $(20, 30), (40, 50), (75, 75),$ and $(100, 100)$. We first assess the Type I error probabilities of the ADF, JEL and AJEL methods with a nominal level of $\alpha = 0.05$. Tables \ref{ty1}-\ref{ty3} provide a summary of the findings, including the probabilities of Type I errors (TE) and their corresponding standard errors (SE), whereas Figure \ref{Typeall} depicts the outcomes graphically. The JEL method appears to perform slightly better or similar to the AJEL method. For instance, when using the $\chi^{2}_{4}$ distribution with sample sizes of (20, 30) at $t = 0.1$, the Type I error probability for the ADF method is 0.007 with a standard error of 0.0027, the JEL method is 0.027 with a standard error of 0.0051, and the AJEL method has a probability of 0.075 and a standard error of 0.0083. When using the $\chi^{2}_{4}$ test with sample sizes of 20 and 30 and $t > 0.2$, the AJEL method produces a Type I error rate that is slightly higher than the expected level. The ADF method comes next, with the JEL method following. However, when testing for the $Exp(1)$ distribution, the ADF method results in a Type I error rate that is much lower than the expected level, and the test becomes more conservative for $t>0.2$. When using the $HN(1)$ distribution, the JEL method performs the best among the three methods, while the ADF method performs the worst for all sample sizes. The Type I error probabilities are slightly above the nominal level for small sample sizes, but improve for larger sample sizes and remain within an acceptable range.  \\

% Anton

{\setlength{\tabcolsep}{0.8em}
\begin{table}[H]
\footnotesize
\caption{Type I error (TE) and standard error (SE) comparison of ADF, JEL, and AJEL tests with nominal level $\alpha = 0.05$ when $X, Y\sim \chi_{4}^{2}$}
\centering
\begin{tabular}{cccccccccccccc}
\hline
  & & \multicolumn{2}{c}{ADF} & \multicolumn{2}{c}{JEL} & \multicolumn{2}{c}{AJEL}    \\  \hline
\multirow{1}{*}{$(n_{1}, n_{2})$}& \multirow{1}{*}{$t$} 	& TE &  SE & TE &  SE & TE &  SE  \\  \hline
(20, 30)&	0.1&	0.007& 0.0027 &			0.027&	0.0051&	0.075&	0.0083 \\
&	0.2&	0.011&	0.0033&	0.019&	0.0043&	0.058&	0.0074 \\
&	0.3&	0.027&	0.0051&	0.018&	0.0042&	0.062&	0.0076 \\
&	0.4&	0.037&	0.0060&	0.017&	0.0041&	0.065&	0.0078 \\
&	0.5&	0.069&	0.0080&	0.017&	0.0041&	0.065&	0.0078 \\
&	0.6&	0.062&	0.0076&	0.059&	0.0075&	0.086&	0.0089 \\
&	0.7&	0.069&	0.0080&	0.052&	0.0070&	0.070&	0.0081 \\
&	0.8&	0.071&	0.0081&	0.046&	0.0066&	0.065&	0.0078 \\
&	0.9&	0.070&	0.0081&	0.038&	0.0060&	0.052&	0.0075 \\ \hline
(40,50)	&0.1	& 0.005	& 0.0023 &		0.016&	0.0040&	0.055&	0.0072 \\
	&0.2&	0.009	&0.0030&	0.019&	0.0042&	0.039&	0.0061 \\
&	0.3&	0.014&	0.0037&	0.010&	0.0031&	0.044&	0.0065 \\
&	0.4&	0.027&	0.0051&	0.013&	0.0036&	0.054&	0.0071 \\
&	0.5&	0.047&	0.0067&	0.022&	0.0046&	0.061&	0.0076 \\
&	0.6&	0.044&	0.0065&	0.049&	0.0068&	0.072&	0.0082 \\
&	0.7&	0.051&	0.0070&	0.050&	0.0069&	0.065&	0.0078 \\
&	0.8	&0.059&	0.0075&	0.042&	0.0063	&0.065&	0.0078 \\
&	0.9&	0.060&	0.0075&	0.043&	0.0063&	0.061&	0.0076 \\ \hline 
(75,75)	&0.1	& 0.001	& 0.0012 &		0.010&	0.0029	&0.012&	0.0034 \\
&	0.2&	0.002&	0.0014	&0.011	&0.0033&	0.013&	0.0039 \\
&	0.3&	0.015&	0.0038&	0.019&	0.0042&	0.016&	0.0040 \\
	&0.4&	0.016&	0.0040&	0.012&	0.0034&	0.018&	0.0048 \\
&	0.5&	0.022&	0.0046&	0.023&	0.0047&	0.024&	0.0051 \\
&	0.6&	0.024&	0.0048&	0.037&	0.0060&	0.046&	0.0073 \\
&	0.7&	0.025&	0.0049&	0.031&	0.0056&	0.031&	0.0056 \\
&	0.8&	0.028&	0.0052&	0.030&	0.0054	&0.035&	0.0071 \\
&	0.9	&0.036&	0.0059&	0.032&	0.0056&	0.031&	0.0056 \\ \hline 
(100, 100)&	0.1		& 0.002& 0.0016 &	0.011&	0.0030&	0.038&	0.0060 \\
&	0.2&	0.006	&0.0024&	0.012&	0.0034&	0.040&	0.0062 \\
&	0.3&	0.014	&0.0037&	0.020&	0.0044&	0.048&	0.0068 \\
&	0.4&	0.024&	0.0048&	0.015&	0.0038&	0.046&	0.0066 \\
&	0.5&	0.033&	0.0056&	0.010&	0.0031&	0.044&	0.0065 \\
&	0.6&	0.036	&0.0059&	0.031&	0.0055&	0.046&	0.0066 \\
&	0.7	&0.038&	0.0060&	0.035&	0.0058&	0.046&	0.0066 \\
&	0.8	&0.047&	0.0067&	0.035&	0.0058&	0.046&	0.0066 \\
&	0.9	&0.045&	0.0066&	0.033&	0.0057&	0.045&	0.0065 \\ \hline 

\end{tabular}
\label{ty1}
\end{table}}

\newpage

{\setlength{\tabcolsep}{0.8em}
\begin{table}[H]
\footnotesize
\caption{Type I error and standard error comparison of ADF, JEL, and AJEL tests with nominal level $\alpha = 0.05$ when $X, Y\sim Exp(4)$}
\centering
\begin{tabular}{cccccccccccccc}
\hline
  & &  \multicolumn{2}{c}{ADF} &  \multicolumn{2}{c}{JEL} & \multicolumn{2}{c}{AJEL}    \\  \hline
\multirow{1}{*}{$(n_{1}, n_{2})$}& \multirow{1}{*}{$t$} 	&  TE &  SE & TE &  SE & TE &  SE  \\ \hline
(20,30)	&0.1	& 0.037	& 0.0060 &	0.081&	0.0086	&0.091&	0.0091 \\
&	0.2&	0.022&	0.0046&	0.047&	0.0067&	0.060	&0.0075 \\
&	0.3&	0.010&	0.0031&	0.074&	0.0083&	0.075&	0.0083 \\
&	0.4&	0.006&	0.0024&	0.061&	0.0076&	0.074&	0.0083 \\
&	0.5&	0.006&	0.0024&	0.060&	0.0075&	0.061&	0.0076 \\
&	0.6&	0.001&	0.0010&	0.096&	0.0093&	0.101&	0.0095 \\
&	0.7&	0.001&	0.0010&	0.088&	0.0090&	0.087&	0.0089 \\
&	0.8&	0.001&	0.0010&	0.065&	0.0078&	0.075&	0.0083 \\
&	0.9&	0.000&	0.0000	&0.062&	0.0076&	0.064&	0.0078 \\ \hline
 
(40,50)&	0.1	& 0.024	& 0.0049 &	0.060&	0.0075&	0.076&	0.0084 \\
&	0.2&	0.010&	0.0031	&0.043	&0.0064&	0.052&	0.0070 \\
&	0.3&	0.000&	0.0000	&0.048	&0.0068&	0.049&	0.0068 \\
&	0.4&	0.000&	0.0000	&0.045	&0.0066&	0.052&	0.0070 \\
&	0.5&	0.000&	0.0000	&0.057&	0.0073&	0.056&	0.0073 \\
&	0.6&	0.000&	0.0000	&0.068	&0.0080&	0.070&	0.0081 \\
&	0.7&	0.000&	0.0000	&0.065&	0.0078&	0.064&	0.0077 \\
&	0.8&	0.000&	0.0000	&0.067&	0.0079&	0.072&	0.0082 \\
&	0.9&	0.000&	0.0000	&0.059&	0.0074&	0.061&	0.0075 \\ \hline

(75,75)&	0.1	& 0.018	& 0.0043 &	0.062&	0.0076&	0.071&	0.0081 \\
&	0.2&	0.005&	0.0022&	0.050	&0.0069&	0.055&	0.0072 \\
&	0.3&	0.000&	0.0000	&0.049&	0.0058&	0.050&	0.0059 \\
&	0.4&	0.000&	0.0000&	0.051&	0.0070&	0.056&	0.0073 \\
&	0.5&	0.000&	0.0000&	0.056&	0.0074&	0.060&	0.0076 \\
&	0.6&	0.000&	0.0000&	0.069&	0.0080&	0.069&	0.0080 \\
&	0.7&	0.000&	0.0000&	0.052&	0.0072&	0.052&	0.0072 \\
&0.8& 0.000&	0.0000& 0.054 &0.0071& 0.056& 0.0073 \\
&	0.9&	0.000&	0.0000&	0.052&	0.0072&	0.055&	0.0074 \\ \hline

(100, 100)&	0.1	& 0.012	& 0.0035 &	0.062&	0.0076&	0.071	&0.0081 \\
	&0.2&	0.006&	0.0024&	0.049&	0.0068&	0.055&	0.0072 \\
	&0.3&	0.001&	0.0010&	0.063&	0.0077&	0.061	&0.0076 \\
	&0.4	&0.001&	0.0010&	0.054&	0.0071&	0.059&	0.0075 \\
	&0.5&	0.000&	0.0000&	0.054&	0.0071&	0.051&	0.0070 \\
	&0.6&	0.000&	0.0000&	0.051&	0.0070&	0.051&	0.0070 \\
	&0.7&	0.000&	0.0000&	0.057&	0.0073&	0.057&	0.0073 \\
& 0.8& 	0.000&	0.0000& 0.057& 0.0073& 0.057& 0.0073 \\
	&0.9&	0.000&	0.0000&	0.052&	0.0072&	0.052&	0.0072 \\ \hline
\end{tabular}
\label{ty2}
\end{table}}

\newpage

{\setlength{\tabcolsep}{0.8em}
\begin{table}[H]
\footnotesize
\caption{Type I error and standard error comparison of ADF, JEL, and AJEL tests with nominal level $\alpha = 0.05$ when $X, Y\sim HN(1)$}
\centering
\begin{tabular}{cccccccccccccc}
\hline
  & &  \multicolumn{2}{c}{ADF} & \multicolumn{2}{c}{JEL} & \multicolumn{2}{c}{AJEL}    \\  \hline
\multirow{1}{*}{$(n_{1}, n_{2})$}& \multirow{1}{*}{$t$} 	& TE &  SE & TE &  SE & TE &  SE  \\ \hline
(20,30)	&0.1	&	0.068	&  0.0080 & 0.074	&0.0083&	0.094&	0.0092 \\
&	0.2&	0.076&	0.0084&	0.063&	0.0077&	0.065&	0.0078 \\
	&0.3&	0.084	&0.0088&	0.050&	0.0069	&0.047&	0.0067 \\
&	0.4&	0.087&	0.0089&	0.051&	0.0070&	0.052&	0.0070 \\
&	0.5&	0.129&	0.0106&	0.050&	0.0069&	0.056&	0.0073 \\
&	0.6 &	0.115	&0.0101&	0.077&	0.0084&	0.090&	0.0090 \\
&	0.7	&0.099&	0.0094&	0.063&	0.0077&	0.069&	0.0080 \\
&	0.8&	0.108&	0.0098&	0.062&	0.0076&	0.078&	0.0085 \\
&	0.9&	0.103&	0.0096&	0.041&	0.0063&	0.071&	0.0081 \\ \hline

(40,50)	&0.1	& 0.062	& 0.0076 &	0.056	&0.0073	&0.066&	0.0079 \\
	&0.2&	0.066&	0.0079	&0.037&	0.0060&	0.042&	0.0063 \\
&	0.3&	0.068&	0.0080	&0.049&	0.0068&	0.054&	0.0071 \\
&	0.4&	0.081&	0.0086	&0.065&	0.0078&	0.058&	0.0074 \\
&	0.5&	0.101&	0.0095	&0.049&	0.0068&	0.059&	0.0075 \\
&	0.6&	0.094&	0.0092&	0.080&	0.0086	&0.087&	0.0089 \\
&	0.7&	0.096&	0.0093	&0.066&	0.0079	&0.064&	0.0077 \\
&	0.8&	0.108&	0.0098&	0.053	&0.0071&	0.070&	0.0081 \\
&	0.9&	0.098&	0.0094&	0.034&	0.0057&	0.055&	0.0072 \\ \hline 

(75,75)&	0.1	& 0.038	& 0.0061 &	0.051&	0.0070&	0.053&	0.0071 \\ 
&	0.2&	0.045&	0.0066&	0.055&	0.0072&	0.054	&0.0071 \\
&	0.3&	0.059&	0.0075&	0.051	&0.0058	& 0.058	&0.0063 \\
&	0.4&	0.058&	0.0074&	0.053&	0.0071&	0.053	&0.0071 \\
&	0.5&	0.077&	0.0084&	0.044&	0.0048&	0.048&	0.0052 \\
&	0.6&	0.079&	0.0085&	0.057&	0.0073&	0.065&	0.0078 \\
&	0.7&	0.081&	0.0086&	0.045&	0.0050&	0.048&	0.0052 \\
&	0.8&	0.091&	0.0091&	0.055&	0.0072	&0.061&	0.0076 \\
&	0.9&	0.098&	0.0094&	0.036&	0.0059	&0.062&	0.0076 \\ \hline

(100, 100)&	0.1	& 0.049	& 0.0069 &	0.046&	0.0066&	0.047&	0.0067 \\
&	0.2&	0.055&	0.0072&	0.071&	0.0081&	0.064&	0.0077 \\
	&0.3&	0.060&	0.0075&	0.054&	0.0071&	0.056&	0.0073 \\
&	0.4&	0.067&	0.0079&	0.060&	0.0075&	0.059&	0.0075 \\
&	0.5&	0.072&	0.0082&	0.048&	0.0068&	0.053&	0.0071 \\
&	0.6&	0.078&	0.0085&	0.065&	0.0078&	0.071&	0.0081 \\
&	0.7&	0.082&	0.0087&	0.066&	0.0079&	0.066&	0.0079 \\
&	0.8&	0.087&	0.0089&	0.063&	0.0077	&0.072&	0.0082 \\
&	0.9	&0.086&	0.0089&	0.039&	0.0061&	0.057&	0.0073 \\ \hline

\end{tabular}
\label{ty3}
\end{table}}

\begin{figure}[H]
  \centering
  \includegraphics[width=1\textwidth]{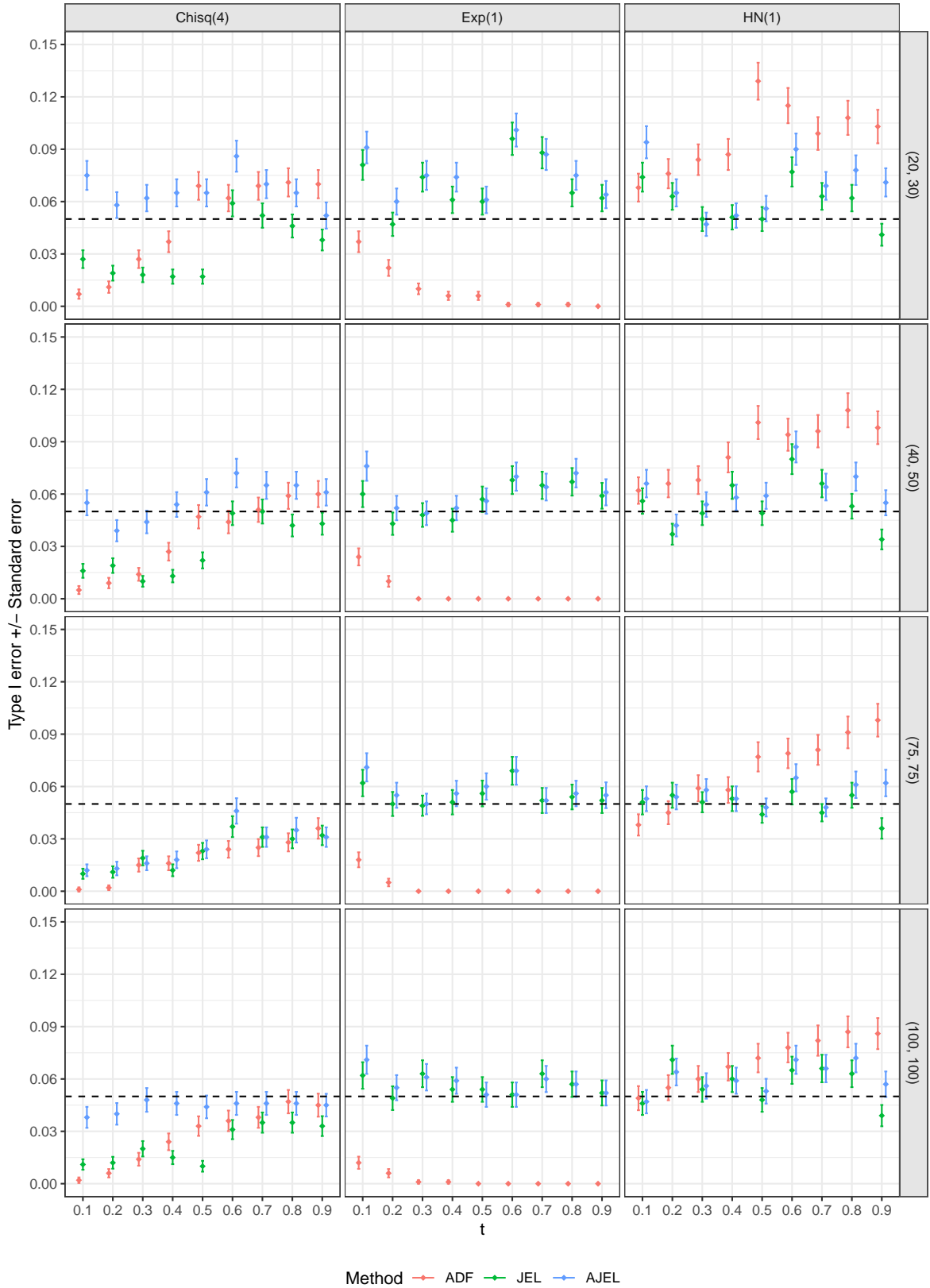}
  \caption{Type I error comparison for ADF, JEL, and AJEL methods for different distributions, sample sizes, and values of $t$}
  \label{Typeall}
\end{figure}

\newpage
Next, we conduct a power analysis for the ADF, JEL, and AJEL methods. Figure \ref{figall} displays the generalized Lorenz curves for Chi-Square, Exponential, and Half-Normal distribution under two sets of parameters. The difference between the generalized Lorenz curves of $\chi^{2}(4)$ and $\chi^{2}(5.5)$ increases as $t$ changes from 0 to 0.5, then the difference decreases as $t$ changes from 0.5 to 1. The difference between the generalized Lorenz curves for $Exp(2)$ and $Exp(4)$ increases significantly as $t$ changes from 0 to 1, while the difference between the generalized Lorenz curves for $HN(1)$ and $HN(1.5)$ also increases but not as significant as in the case of the Exponential distributions. Tables \ref{pw1}-\ref{pw3} present a summary of the outcomes, depicting the power and standard errors, whereas Figure \ref{powerall} illustrates the results graphically. As expected, the power for Chi-Square distributions tends to increase as the value of $t$ ranges from 0.1 to 0.5, followed by a slight drop as $t$ ranges from 0.5 to 0.9. The AJEL method exhibits better power among the three methods, while the ADF method is the weakest. Moreover, for Exponential distributions, the power of the JEL and AJEL methods tends to increase as $t$ ranges from 0 to 0.9, while the power of the ADF method tends to decrease as the value of $t$ changes from 0 to 0.9. When considering the Half-Normal distributions, all three methods show a similar pattern, with the ADF method being the most effective when $t \leq 0.5$, and the JEL and AJEL methods being superior when $t > 0.5$. The increase in power is more pronounced for Exponential distributions than for Half-Normal distributions in the case of the JEL and AJEL methods. In all three cases, the AJEL method outperforms the JEL method slightly.
% Anton

\begin{figure}[H]
  \centering
  \includegraphics[width=1\textwidth]{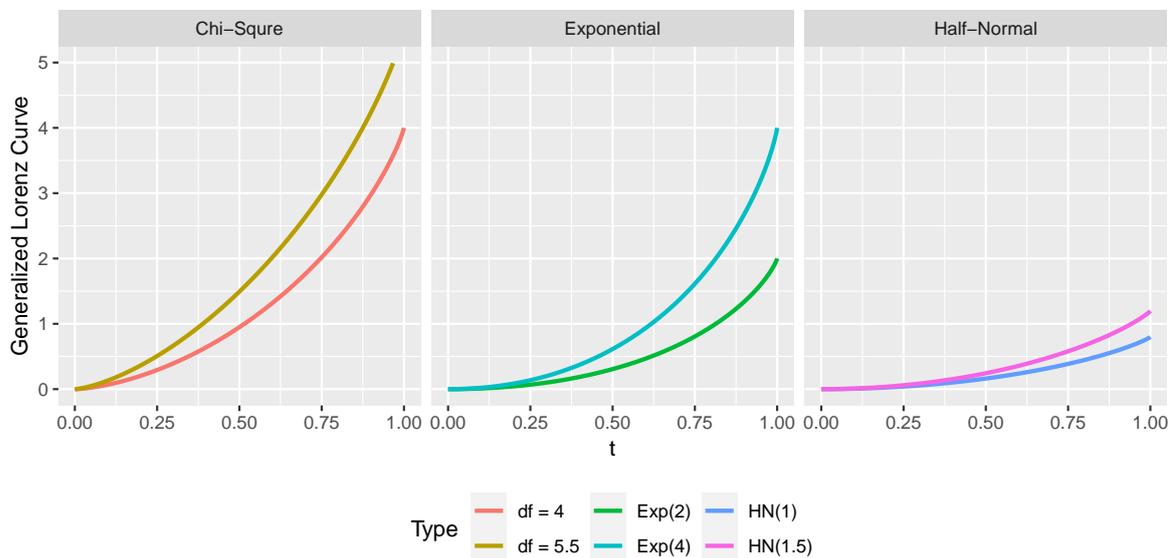}
  \caption{Generalized Lorenz Curves for Chi-Square, Exponential, and Half-Normal Distributions}
  \label{figall}
\end{figure}

{\setlength{\tabcolsep}{0.8em}
\begin{table}[H]
\footnotesize
\caption{Power comparison of ADF, JEL, and AJEL tests with nominal level $\alpha = 0.05$ when $X\sim \chi^{2}_{4}$ and $Y\sim \chi^{2}_{5.5}$}
\centering
\begin{tabular}{ccccccccccccccccc}
\hline
  & &  \multicolumn{2}{c}{ADF} & \multicolumn{2}{c}{JEL} & \multicolumn{2}{c}{AJEL}    \\  \hline
\multirow{1}{*}{$(n_{1}, n_{2})$}& \multirow{1}{*}{$t$} 	&   Power &  SE & Power &  SE & Power &  SE  \\
\hline
(20,30)	&0.1	& 0.128 & 0.0106	&	0.357&	0.0152	&0.372&	0.0153 \\
&	0.2&  0.190&	0.0124&	0.400	&0.0155&	0.415&	0.0156 \\
&	0.3&	0.330&	0.0149&	0.423&	0.0156&	0.441&	0.0157 \\
&	0.4&	0.360&	0.0152&	0.454&	0.0157&	0.471&	0.0158 \\
&	0.5&	0.454&	0.0157&	0.443&	0.0157&	0.461&	0.0158 \\
&	0.6&	0.475&	0.0158&	0.834&	0.0118&	0.840&	0.0116 \\
&	0.7&	0.439&	0.0157&	0.771&	0.0133&	0.781	&0.0131 \\
&	0.8&	0.450&	0.0157&	0.717&	0.0142&	0.725&	0.0141 \\
&	0.9&	0.488&	0.0158&	0.594&	0.0155&	0.610	&0.0154 \\ \hline
 
(40,50)	&0.1 & 0.182	& 0.0122	&	0.483&	0.0158&	0.497&	0.0158 \\
	&0.2&	0.369&	0.0153&	0.595&	0.0155&	0.607&	0.0154 \\
&	0.3&	0.508&	0.0158&	0.660&	0.0150&	0.673&	0.0148 \\
&	0.4&	0.578&	0.0156&	0.664&	0.0149&	0.676&	0.0148 \\
&	0.5&	0.657&	0.0150&	0.644&	0.0151&	0.654&	0.0150 \\
&	0.6&	0.646&	0.0151&	0.935&	0.0078&	0.936&	0.0077 \\
&	0.7&	0.655&	0.0150&	0.914&	0.0089&	0.915&	0.0088 \\
&	0.8&	0.670	&0.0149&	0.878&	0.0103&	0.883&	0.0102 \\
	&0.9&	0.694&	0.0146&	0.783&	0.0130&	0.793&	0.0128 \\ \hline
 
(75,75)&	0.1	 & 0.438 & 0.0157	&	0.688	&0.0147&	0.695&	0.0146 \\
&	0.2&	0.595&	0.0155&	0.811&	0.0124&	0.820&	0.0036 \\
&0.3&	0.757&	0.0136&	0.986&	0.0037	&0.987	&0.0036 \\
&	0.4&	0.790&	0.0129&	0.869&	0.0107&	0.872&	0.0106 \\
&	0.5&	0.841&	0.0116&	0.983&	0.0041&	0.983&	0.0041 \\
&	0.6&	0.852&	0.0112&	0.979&	0.0045&	0.980&	0.0044 \\
&	0.7&	0.856&	0.0111&	0.791&	0.0129&	0.797&	0.0127 \\
&	0.8&	0.872&	0.0106&	0.954&	0.0066&	0.957&	0.0064 \\
&	0.9&	0.875&	0.0105&	0.927&	0.0082&	0.928&	0.0082 \\ \hline

(100,100)&	0.1	& 0.625 & 0.0153	&	0.809&	0.0124&	0.813&	0.0123 \\
	&0.2&	0.786&	0.0130&	0.899&	0.0095&	0.901&	0.0094 \\
&	0.3&	0.871&	0.0106&	0.931&	0.0080&	0.933&	0.0079 \\
	&0.4&	0.923&	0.0084&	0.932&	0.0080&	0.933&	0.0079 \\
&	0.5&	0.932	&0.0080&	0.936&	0.0077&	0.94&	0.0075 \\
	&0.6&	0.933	&0.0079	&0.993&	0.0026&	0.993&	0.0026 \\
	&0.7&	0.942&	0.0074	&0.991&	0.0030&	0.991&	0.0030 \\
	&0.8&	0.951&	0.0068&	0.984&	0.0040&	0.986&	0.0037 \\
	&0.9	&0.953&	0.0067&	0.972	&0.0052&	0.973&	0.0051 \\ \hline
\end{tabular}
\label{pw1}
\end{table}}

\newpage

{\setlength{\tabcolsep}{0.8em}
\begin{table}[H]
\footnotesize
\caption{Power comparison of ADF, JEL, and AJEL tests with nominal level $\alpha = 0.05$ when $X\sim Exp(4)$ and $Y\sim Exp(2)$}
\centering
\begin{tabular}{cccccccccccccc}
\hline
  & &   \multicolumn{2}{c}{ADF} & \multicolumn{2}{c}{JEL} & \multicolumn{2}{c}{AJEL}    \\  \hline
\multirow{1}{*}{$(n_{1}, n_{2})$}& \multirow{1}{*}{$t$} 	& Power &  SE &  Power &  SE & Power &  SE  \\
\hline
(20, 30)&	0.1&	0.695 & 0.0146 &		0.363&	0.0152&	0.382&	0.0154 \\
&	0.2&	0.678&	0.0148&	0.463&	0.0158&	0.478&	0.0158 \\
&	0.3&	0.520&	0.0158&	0.551&	0.0157&	0.566&	0.0157 \\
&	0.4&	0.417&	0.0156&	0.613&	0.0154&	0.626&	0.0153 \\
&	0.5&	0.418&	0.0156&	0.664&	0.0149&	0.682&	0.0147 \\
&	0.6&	0.367&	0.0152&	0.893&	0.0098&	0.901&	0.0094 \\
&	0.7&	0.295&	0.0144&	0.884&	0.0101&	0.894&	0.0097 \\
&	0.8&	0.284&	0.0143&	0.868&	0.0107&	0.875&	0.0105 \\
&	0.9&	0.238&	0.0135&	0.834&	0.0118&	0.846&	0.0114 \\ \hline
							
(40, 50)&	0.1	&	0.835 & 0.0117 &	0.471&	0.0158&	0.485&	0.0158 \\
	&0.2&	0.812&	0.0124&	0.606&	0.0155&	0.615&	0.0154 \\
&	0.3&	0.582&	0.0156&	0.724&	0.0141&	0.730&	0.0140 \\
	&0.4&	0.492&	0.0158&	0.792&	0.0128&	0.803&	0.0126 \\
&	0.5&	0.470&	0.0158&	0.837&	0.0117&	0.851&	0.0113 \\
&	0.6	&0.368&	0.0153&	0.969&	0.0055&	0.971&	0.0053 \\
&	0.7	&0.324&	0.0148&	0.971&	0.0053&	0.972&	0.0052 \\
&	0.8&	0.317&	0.0147&	0.961&	0.0061&	0.966&	0.0057 \\
&	0.9	&0.290&	0.0143&	0.958&	0.0063&	0.960&	0.0062 \\ \hline
							
(75, 75)&	0.1&	0.901 & 0.0094 &		0.616&	0.0154&	0.623&	0.0153 \\
	&0.2&	0.886&	0.0101&	0.787&	0.0129&	0.791&	0.0129 \\
&	0.3&	0.719&	0.0142	&0.975&	0.0049&	0.976&	0.0048 \\
&	0.4&	0.586&	0.0156&	0.930&	0.0081&	0.930&	0.0081 \\
&	0.5&	0.546&	0.0157&	0.993&	0.0026&	0.993&	0.0026 \\
&	0.6&	0.440&	0.0157&	0.995&	0.0022&	0.995&	0.0022 \\
&	0.7&	0.408&	0.0155&	0.987&	0.0036&	0.988&	0.0034 \\
&	0.8&	0.371&	0.0153&	0.995&	0.0022&	0.995&	0.0022 \\
&	0.9&	0.348&	0.0151&	0.994&	0.0024&	0.995&	0.0022 \\ \hline
							
(100, 100)	&0.1 & 0.946 & 0.0071	&		0.661&	0.0150&	0.667&	0.0149 \\
&	0.2&	0.930&	0.0081&	0.851&	0.0113&	0.856&	0.0111 \\
	&0.3&	0.717&	0.0142&	0.927&	0.0082&	0.928&	0.0082 \\
&	0.4	&0.629&	0.0153&	0.965&	0.0058&	0.968&	0.0056 \\
&	0.5&	0.551&	0.0157&	0.984&	0.0040&	0.985&	0.0038 \\
&	0.6&	0.483&	0.0158&	1.000&	0.0000&	1.000&	0.0000 \\
	&0.7	&0.429&	0.0157&	1.000&	0.0000&	1.000&	0.0010 \\
& 0.8&	0.419&	0.0156	&0.999&	0.0010&0.999&	0.0010\\
&	0.9&	0.355	&0.0151&	0.999&	0.0010&	0.999&	0.0010 \\ \hline

\end{tabular}
\label{pw2}
\end{table}}

\newpage

{\setlength{\tabcolsep}{0.8em}
\begin{table}[H]
\footnotesize
\caption{Power comparison of ADF, JEL, and AJEL tests with nominal level $\alpha = 0.05$ when $X\sim HN(1)$ and $Y\sim HN(1.5)$}
\centering
\begin{tabular}{cccccccccccccc}
\hline
  & & \multicolumn{2}{c}{ADF} &  \multicolumn{2}{c}{JEL} & \multicolumn{2}{c}{AJEL}    \\  \hline
\multirow{1}{*}{$(n_{1}, n_{2})$}& \multirow{1}{*}{$t$} 	& Power &  SE & Power &  SE & Power &  SE  \\
\hline
(20, 30)	&0.1& 0.415	&  0.0156 &		0.260&	0.0139&	0.274&	0.0141 \\
&	0.2&	0.438&	0.0157&	0.292&	0.0144&	0.303&	0.0145 \\
&	0.3&	0.479&	0.0158&	0.309&	0.0146&	0.327&	0.0148 \\
&	0.4&	0.485&	0.0158&	0.373&	0.0153&	0.390&	0.0154 \\
&	0.5&	0.541&	0.0158&	0.377&	0.0153&	0.395&	0.0155 \\
&	0.6&	0.559&	0.0157&	0.721&	0.0142&	0.737&	0.0139 \\
&	0.7&	0.520&	0.0158&	0.694&	0.0146&	0.716&	0.0143 \\
&	0.8&	0.537&	0.0158&	0.670&	0.0149&	0.688&	0.0147 \\
&	0.9&	0.548&	0.0157&	0.597&	0.0155&	0.616&	0.0154 \\\hline
							
(40, 50)&	0.1	& 0.654	& 0.0150 &	0.312&	0.0147&	0.330&	0.0149 \\
&	0.2&	0.661	&0.0150&	0.393&	0.0154&	0.400&	0.0155 \\
	&0.3&	0.676&	0.0148&	0.455&	0.0157&	0.468&	0.0158 \\
&	0.4&	0.690&	0.0146&	0.544&	0.0158&	0.561&	0.0157 \\
&	0.5&	0.725&	0.0141&	0.577&	0.0156&	0.589&	0.0156 \\
&	0.6&	0.713&	0.0143&	0.838&	0.0117&	0.845&	0.0114 \\
&	0.7&	0.723&	0.0142&	0.840&	0.0116	&0.847&	0.0114 \\
&	0.8&	0.733&	0.0140&	0.841&	0.0116&	0.847&	0.0114 \\
&	0.9&	0.746&	0.0138&	0.809&	0.0124&	0.816&	0.0123 \\\hline
							
(75, 75)&	0.1& 0.839	& 0.0116 &		0.381&	0.0154&	0.390&	0.0154 \\
&	0.2	&0.843&	0.0115&	0.508&	0.0158&	0.515&	0.0158 \\
	&0.3&	0.853	&0.0112&	0.896&	0.0097&	0.897&	0.0096 \\
&	0.4&	0.866	&0.0108&	0.689&	0.0146&	0.700&	0.0145 \\
&	0.5&	0.877&	0.0104&	0.912&	0.0090&	0.913&	0.0089 \\
&	0.6&	0.876&	0.0104&	0.933&	0.0079&	0.936&	0.0077 \\
&	0.7&	0.882	&0.0102	&0.846&	0.0114	&0.851&	0.0113 \\
	&0.8&	0.891&	0.0099&	0.932&	0.0080&	0.934&	0.0079 \\
&	0.9&	0.892&	0.0098&	0.929&	0.0081&	0.931&	0.0080 \\\hline
							
(100, 100)	&0.1 & 0.938 & 0.0076	&0.390&	0.0155&	0.400&	0.0155 \\
&	0.2&	0.940&	0.0075&	0.554&	0.0157	&0.561&	0.0157 \\
&	0.3&	0.941&	0.0075&	0.691&	0.0146&	0.695&	0.0146 \\
&	0.4&	0.944&	0.0073&	0.787&	0.0129&	0.792&	0.0128 \\
&	0.5&	0.951	&0.0068&	0.837&	0.0117&	0.839&	0.0116 \\
&	0.6&	0.953&	0.0067&	0.958&	0.0063&	0.958&	0.0063 \\
&	0.7&	0.952&	0.0068&	0.969&	0.0055&	0.970&	0.0054 \\
&	0.8&	0.957&	0.0064&	0.973&	0.0051&	0.973&	0.0051 \\
&	0.9&	0.958&	0.0063&	0.972&	0.0052&	0.975&	0.0049 \\ \hline

\end{tabular}
\label{pw3}
\end{table}}
\newpage

\begin{figure}[H]
  \centering
  \includegraphics[width=1\textwidth]{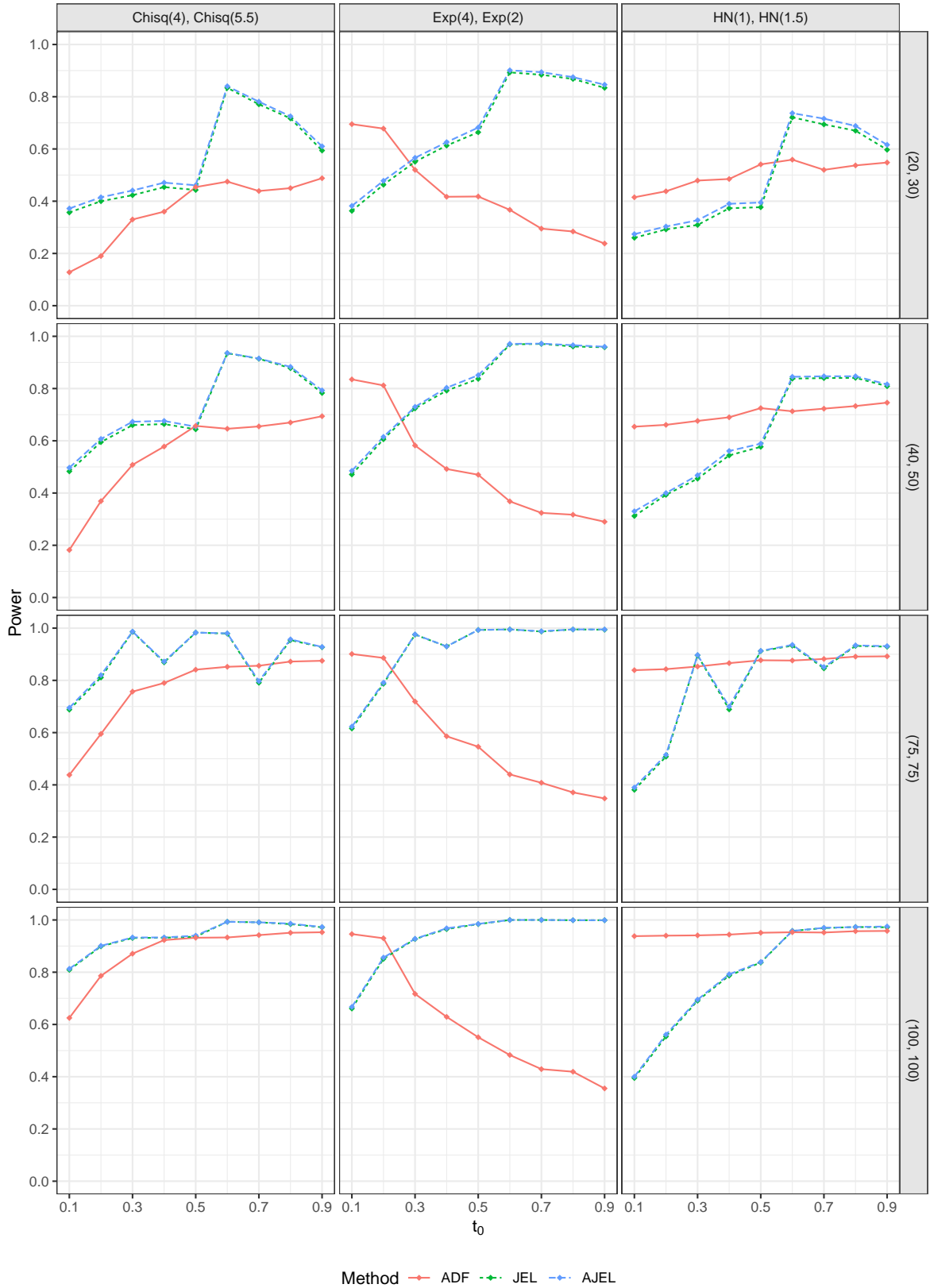}
  \caption{Power comparison for ADF, JEL, and AJEL methods for different distributions, sample sizes, and values of $t$}
  \label{powerall}
\end{figure}

\newpage

\section{Applications}
In this section, we use the proposed methods to evaluate the equality of the generalized Lorenz curves for various subgroups of employees of California State University (CSU) and University of California (UC) systems in 2021.\footnote{The most recent data was obtained and is available on the State Controller's Office website at \url{https://publicpay.ca.gov/Reports/Explore.aspx}.} The 2021 data comprises 105,414 records of salaries for CSU and 299,448 records of salaries for UC. The data is anonymous but grouped by employer name and type of position.\footnote{To identify CSU instructional faculty, we filtered the data set based on the following keywords: ``Instructional Faculty", ``Teaching Associate", ``Visiting Faculty", ``Lecturer", "Academic-Related", ``Department Chair". To identify UC instructional faculty, we filtered the data set based on the following keywords: ``Assoc Prof", ``Assoc Adj", ``Asst Adj", ``Asst Prof", ``Prof In", ``VIS Prof", ``Adj Instr", ``Lect", ``Grad", ``Adj". All other employees that didn't possess the listed keywords in the description of their position were identified as non-teaching staff. Further, log-transformed data was used to avoid the unnecessary computational burden.} For the purpose of this analysis, we considered testing the hypothesis defined in equation (\ref{hypo}) at a significance level of 5\% for the following three scenarios. In each scenario, we apply the proposed testing procedures for $t = 0.0, 0.2, 0.4, 0.6, 0.8, 1.0$, and obtain the corresponding test statistics and p-values.

\subsection{Using complete data to compare income distributions of faculty at CSU Monterey Bay and CSU San Bernardino} 

In this scenario, we examine the salaries of all instructional faculty from CSU Monterey Bay and CSU San Bernardino. Filtering data based on the employer name and the type of position resulted in obtaining a total of 546 records for CSU Monterey Bay faculty salaries and 1,265 records for CSU San Bernardino faculty. These two institutions were chosen because their Lorenz and generalized Lorenz curves appear to be very similar as shown in Figure \ref{app1}.    
\begin{figure}[H]
  \centering
  \includegraphics[width=1\textwidth]{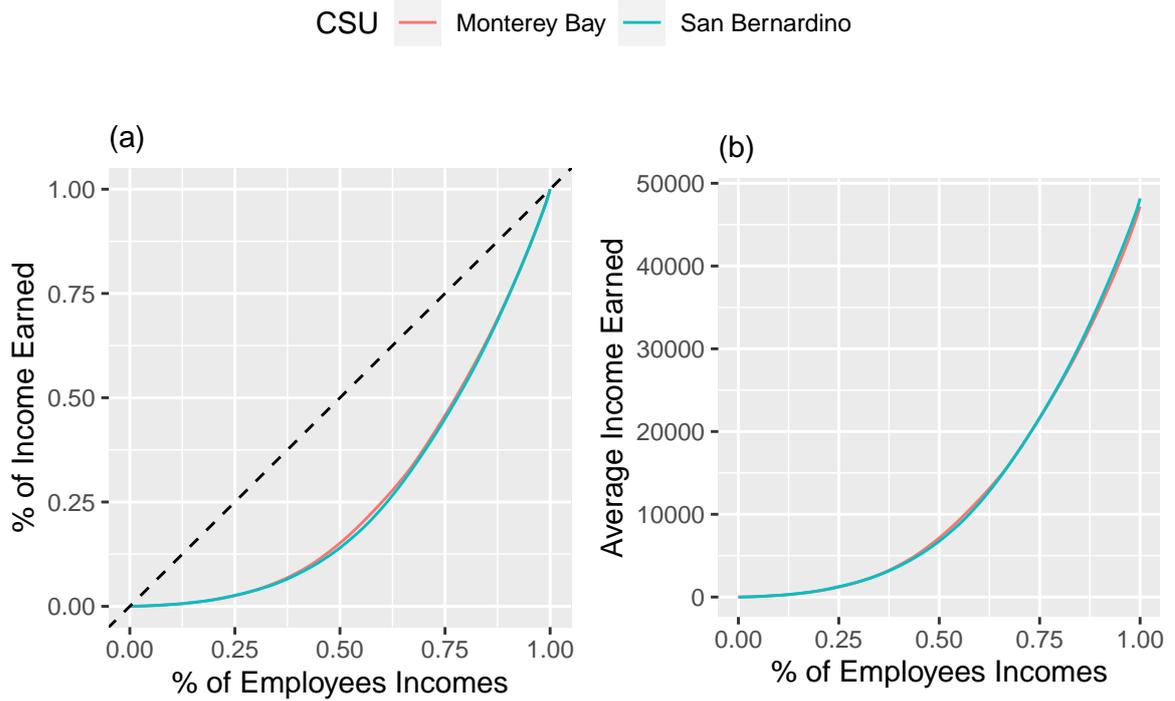}
  \caption{(a) Lorenz curves and (b) generalized Lorenz curves for salaries of CSU Monterey Bay and CSU San Bernardino faculty}
  \label{app1}
\end{figure}

{\setlength{\tabcolsep}{0.5em}
\begin{table}[H]
\footnotesize
\caption{Test statistic and p-value}
\centering
\begin{tabular}{ccrrrrrrr}
\hline
& &   \multicolumn{6}{c}{$t$}   \\  \hline
Method&  Value &  0.0 & 0.2  & 0.4  & 0.6  & 0.8  & 1.0 \\ \hline
ADF&	Test statistic& - &	0.0119	&0.0119&	0.0119&	0.0128&	0.0135 \\
&	p-value	& - & 0.9133&	 0.9133&	0.9133	&0.9100&	0.9076\\\hline
JEL&	Test statistic&	0.2116&	667.8884&	417.0143&	0.0380&	0.0070&	0.0138  \\
&	p-value	& 0.6455&	0.0000&	0.0000&	0.8454	&0.9332&	0.9065 \\ \hline
AJEL&	Test statistic&	0.2126&	669.2835&	417.9359&	0.0382&	0.0071&	0.0139 \\
&	p-value	&0.6448&	0.0000&	0.0000&	0.8450&	0.9331&0.9063 \\ \hline
\end{tabular}
\label{tab1}
\end{table}}
Table \ref{tab1} shows the computed test statistics and p-values for ADF, JEL, and AJEL methods at values of $t = 0.0, 0.2, 0.4, 0.6, 0.8, 1.0$. It is observed that both methods reject the null hypothesis at a 5\% significance level for $t=0.2$ and $t=0.4$, indicating sufficient evidence to conclude that the generalized Lorenz curves considered are significantly different for the 20th and 40th percentiles. However, for $t=0.0$, $t=0.6$, $t=0.8$, and $t=1.0$, the p-values obtained by both methods are greater than 0.05, implying that we do not have enough evidence to infer significant differences in the considered generalized Lorenz curves for the 0th, 60th, 80th, and 100th percentiles. These mixed results can be attributed to the multiple intersections of the considered generalized Lorentz curves.  

\subsection{Using complete data to compare income distributions of faculty at CSU San Bernardino and CSU San Francisco} 

In this scenario, we examine the salaries of all instructional faculty from CSU San Bernardino and CSU San Francisco. Filtering data based on the employer name and the type of position resulted in obtaining the total of 1,265 records for CSU San Bernardino faculty salaries and 2,294 records for CSU San Francisco faculty. These two institutions were chosen because their Lorenz and generalized Lorenz curves appear to be very distinct as shown in Figure \ref{app2}.

\begin{figure}[H]
  \centering
  \includegraphics[width=1\textwidth]{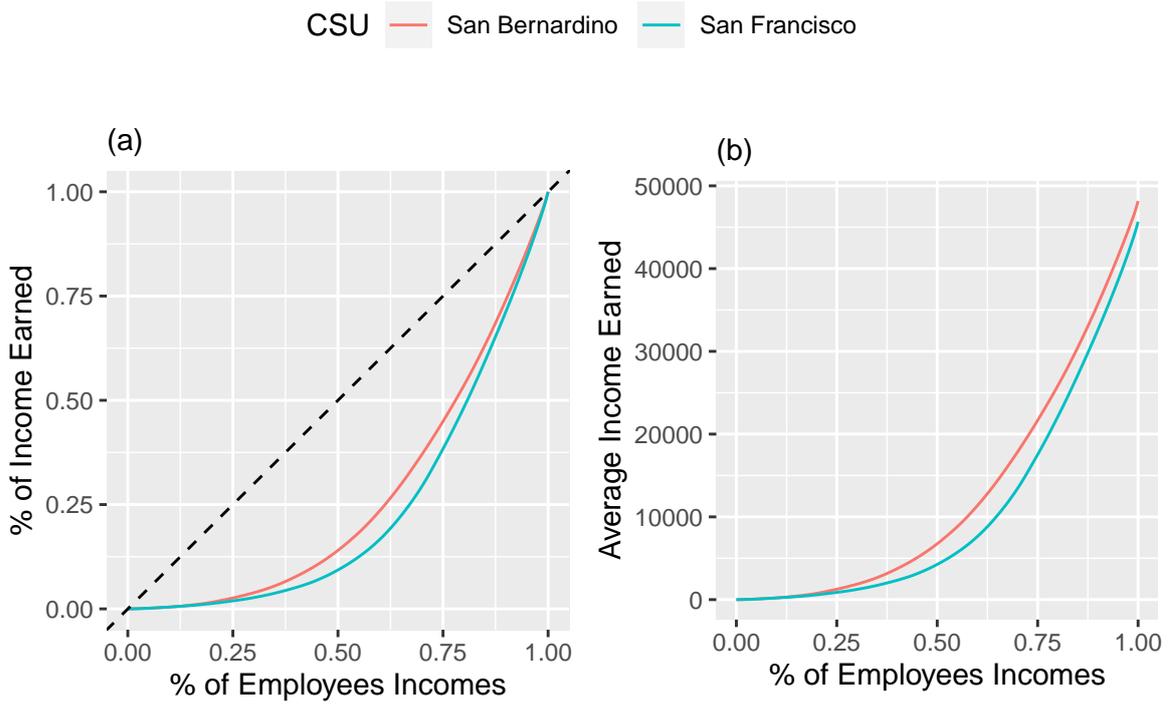}
  \caption{(a) Lorenz Curves and (b) Generalized Lorenz Curves for salaries of CSU San Bernardino and CSU San Francisco faculty}
  \label{app2}
\end{figure}

{\setlength{\tabcolsep}{0.5em}
\begin{table}[H]
\footnotesize
\caption{Test statistic and p-value}
\centering
\begin{tabular}{ccrrrrrrr}
\hline
& &   \multicolumn{6}{c}{$t$}   \\  \hline
Method&  Value &  0.0 & 0.2  & 0.4  & 0.6  & 0.8  & 1.0 \\ \hline
ADF	&Test statistic&	- &17.3485&	20.1831&	23.0600&	24.9908&	28.8444 \\
&	p-value&	- &0.0000&	0.0000&	0.0000&	0.0000&	0.0000 \\ \hline
JEL&	Test statistic&	1.5821&	217.8804&	942.6197&	250.0000&	30.2795&	20.0987  \\
&	p-value&	0.2085&	0.0000&	0.0000&	0.0000&	0.0000&	0.0000 \\ \hline
AJEL&	Test statistic&	 1.5860	&218.2452&	943.7490&	250.0000&	30.3477	&20.1438 \\
&	p-value&	0.2079 &	0.0000&	0.0000&	0.0000&	0.0000&	0.0000 \\ \hline
\end{tabular}
\label{tab2}
\end{table}}

Table \ref{tab2} presents the computed test statistics and p-values for ADF, JEL, and AJEL methods at $t = 0.0, 0.2, 0.4, 0.6, 0.8, 1.0$. The results indicate that both methods reject the null hypothesis for all values of $t$ except $t=0$, implying sufficient evidence to conclude that the two generalized Lorenz curves are significantly different for the 20th, 40th, 60th, 80th, and 100th percentiles at a 5\% significance level. However, for $t=0.0$, both methods yield p-values greater than 0.05, suggesting insufficient evidence to infer significant differences in the considered generalized Lorenz curves for the 0th percentile. These findings can be explained by the fact that the minimum salaries for both institutions are nearly identical, but the difference between the institutions increases as $t$ increases.

\subsection{Using incomplete data to compare income distributions of all faculty at CSU and UC}

In this scenario, we examine the instructional faculty salaries across all CSU and UC institutions. Filtering data based on the type of position resulted in obtaining the total of 34,927 records for CSU faculty salaries and 18,104 records for UC faculty salaries. The Lorenz and generalized Lorenz curves are shown in Figure \ref{app3}.
\begin{figure}[H]
  \centering
  \includegraphics[width=1\textwidth]{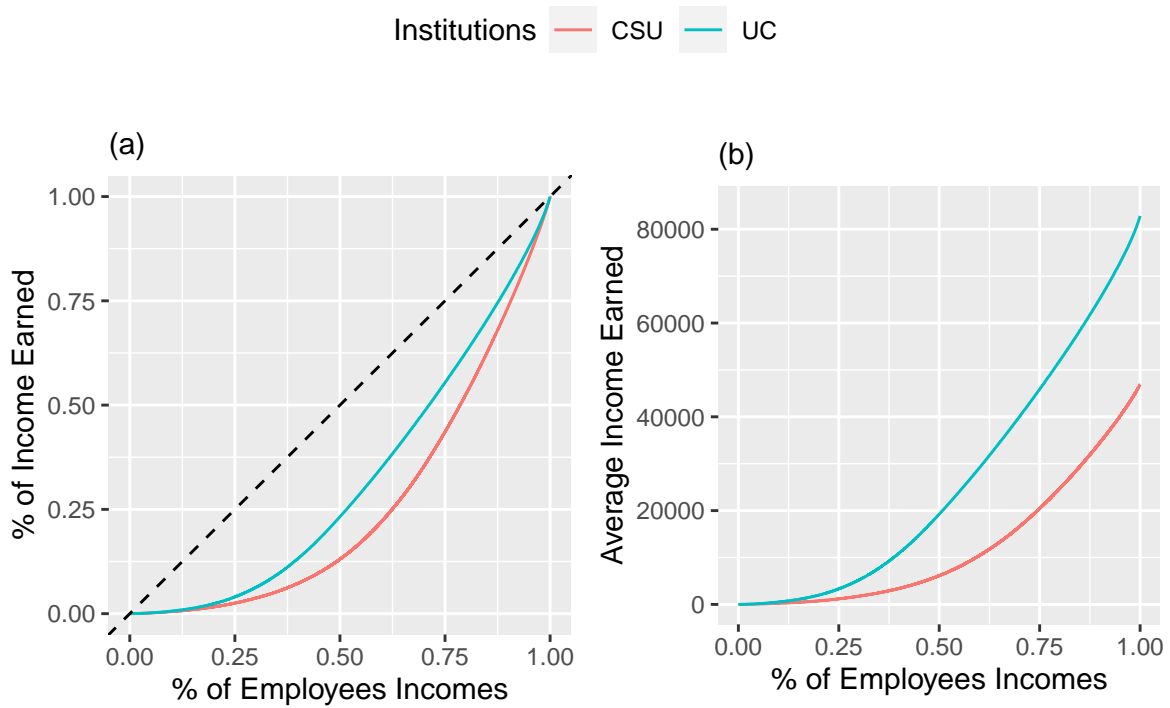}
  \caption{(a) Lorenz curves and (b) generalized Lorenz curves for salaries of CSU and UC faculty}
  \label{app3}
\end{figure}
Since both data sets are quite large, we decided to utilize the proposed procedures using the samples. As the empirical likelihood approach is known to be effective for relatively small samples, we selected sample sizes of $n_{1}=n_{2}=100$. Given that the difference between the generalized Lorenz curves is noticeable, we anticipate that both procedures will be capable of detecting the difference even with such modest sample sizes.

{\setlength{\tabcolsep}{0.5em}
\begin{table}[H]
\footnotesize
\caption{Test statistic and p-value}
\centering
\begin{tabular}{ccrrrrrrr}
\hline
& &   \multicolumn{6}{c}{$t$}   \\  \hline
Method&  Value &  0.0 & 0.2  & 0.4  & 0.6  & 0.8  & 1.0 \\ \hline
ADF&	Test statistic&	-&14.9673&	17.3378&	18.9842&	19.8941&	24.9073 \\
&	p-value&	-	& 0.0001& 0.0000&	0.0000&	0.0000&	0.0000 \\ \hline

JEL	&Test statistic&	0.6586&	48.6775	&20.5069&	66.4861	&38.4036&	25.2639 \\
&	p-value	&0.4171&	0.0000&	0.0000&	0.0000&	0.0000&	0.0000 \\ \hline
AJEL&	Test statistic&	0.6833&	49.7646&	20.9485&	67.5089	&39.0850&	25.7541 \\
&	p-value&	0.4084	&0.0000	&0.0000&	0.0000&	0.0000&	0.0000
\\ \hline
\end{tabular}
\label{tab3}
\end{table}}

Table \ref{tab3} displays the test statistics and p-values computed for ADF, JEL, and AJEL methods at $t=0.0, 0.2, 0.4, 0.6, 0.8,$ and $1.0$. The results indicate that all three methods reject the null hypothesis for all values of $t$ except $t=0$ at a 5\% significance level, providing sufficient evidence to conclude that the generalized Lorenz curves are significantly different for the 20th, 40th, 60th, 80th, and 100th percentiles. Conversely, for $t=0.0$, the p-values exceed 0.05, indicating insufficient evidence to conclude that the curves are significantly different at the 0th percentile. Similarly to the previous scenario, these findings can be attributed to the fact that the minimum salaries are nearly identical for both institutions, but the difference between them increases as $t$ increases. However, these findings are noteworthy because they were obtained using relatively small samples.

\subsection{Using complete data to compare income distributions of all faculty at CSUSB in Years 2009 and 2020}

In this scenario, we examine the salaries of all instructional faculty at CSUSB in the years 2009 and 2020. Filtering data based on the employer name and the type of position resulted in obtaining a total of 1,111 records for 2009 and 1,331 records for 2020. The Lorenz and generalized Lorenz curves are graphed in Figure \ref{app4}.

\begin{figure}[H]
  \centering
  \includegraphics[width=1\textwidth]{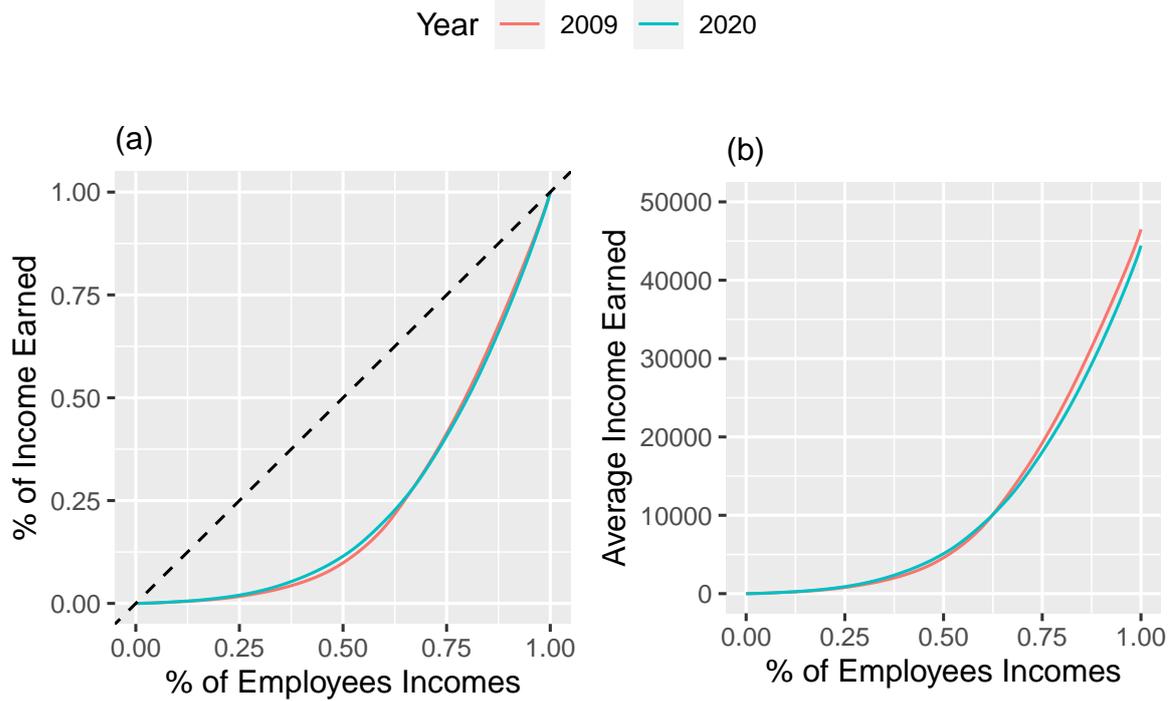}
  \caption{(a) Lorenz curves and (b) generalized Lorenz curves for salaries of CSU faculty in years 2009 and 2020}
  \label{app4}
\end{figure}

{\setlength{\tabcolsep}{0.5em}
\begin{table}[H]
\footnotesize
\caption{Test statistic and p-value}
\centering
\begin{tabular}{ccrrrrrrr}
\hline
& &   \multicolumn{6}{c}{$t$}   \\  \hline
Method&  Value &  0.0 & 0.2  & 0.4  & 0.6  & 0.8  & 1.0 \\ \hline
ADF&	Test statistic	& -&	1.5990&	2.2190	&2.2614&	2.5820&	2.5776 \\
&	p-value	&	-& 0.2060&	0.1363&	0.1326&	0.1081	&0.1084 \\ \hline
JEL	&Test statistic&	0.1719	&52.4042&29.2928	&0.1161&	0.0624&	0.1486 \\
&	p-value	&0.6785	&0.0000&	0.0000&	0.7333	&0.8028&	0.6999 \\ \hline
AJEL&	Test statistic&	0.1725&	52.5651&	29.3841&	0.1165&	0.0626&	0.1490 \\
&	p-value&	0.6779	&0.0000&	0.0000&	0.7329&	0.8025	&0.6994
\\ \hline
\end{tabular}
\label{tab4}
\end{table}}

Table \ref{tab4} shows the calculated test statistics and p-values for JEL and AJEL methods at $t = 0.0$, $0.2$, $0.4$, $0.6$, $0.8$, $1.0$ and for ADF method at $t = 0.2$, $0.4$, $0.6$, $0.8$, $1.0$. On the one hand, the ADF method led to the failure of rejection of the null hypothesis $\theta=0$ for all values of $t$, which means that at 5\% significance level, we do not have sufficient evidence to conclude that the considered generalized Lorenz curves are significantly different for 20th, 40th, 60th, 80th, and 100th percentiles. On the other hand, the JEL and AJEL methods result in large p-values for $t = 0.0$, $0.6$, $0.8$, $1.0$, and small p-values at $t=0.2$, $t=0.4$, which means that we have sufficient evidence to conclude that the generalized Lorenz curves in this scenario are significantly different at $t=0.2$ and $t=0.4$. Such mixed results can be explained by the fact that the considered generalized Lorentz curves intersect multiple times. The two proposed methods were able to capture these differences using relatively small samples.

\section{Conclusions}
In this paper, we developed two non-parametric JEL-based methods using a $U$-statistic to test the equality of two generalized Lorenz curves. The limiting distribution of the likelihood ratios is shown to follow a chi-squared distribution with one degree of freedom. Simulations studies with different distribution types and sample sizes illustrate that both methods show improved Type I error probability and power as the sample size increases. In general, the AJEL resulted in higher test powers in comparison to JEL across all distributions, sample sizes, and values of $t$, except for a few cases. However, the AJEL method has a higher Type I error rate than JEL, though still within an acceptable range. Moreover, these methods exhibit robustness across a range of scenarios, outperforming the existing ADF method. The proposed testing methods are applied to three distinct scenarios using  CSU and UC salary data. The results are found to be acceptable for both complete and incomplete data. This suggests that the proposed testing methods can be applied to a variety of data types, providing satisfactory results regardless of data completeness.

\newpage

\begin{appendices}

\section{Proofs of Theorems}

\begin{proof}\textbf{Theorem \ref{thm1}} \hfill \break

Let $n_{1}\leq n_{2}$. As shown in \cite{arv1969}, the jackknife procedure for the two sample $U$-statistics, $U_{n_{1},n_{2}}$, we have

\begin{equation}\label{A1}
\begin{aligned}
V_{i,0} &= n_{1}U_{n_{1},n_{2}}- (n_{1}-1) U_{n_{1}-1, n_{2}}^{-i,0}, \quad i = 1,\cdots,n_{1} \\
V_{0,j} &= n_{2}U_{n_{1},n_{2}}- (n_{2}-1) U_{n_{1}-1, n_{2}}^{0,-j}, \quad j = 1,\cdots,n_{2} 
\end{aligned}
\end{equation} 
Further, they proposed a consistent estimator of $Var(U_{n_{1},n_{2}})$ given as
\begin{equation*}\label{}
\begin{aligned}
\widehat{Var}_{\text{Jack}}(U_{n_{1},n_{2}}) = \dfrac{1}{n_{1}(n_{1}-1)} \sum_{i=1}^{n_{1}}\bigg( V_{i,0} - \Bar{V}_{\cdot, 0}\bigg)^{2} + \dfrac{1}{n_{2}(n_{2}-1)} \sum_{j=1}^{n_{2}}\bigg( V_{0,j} - \Bar{V}_{0, \cdot}\bigg)^{2},
\end{aligned}
\end{equation*} 
where $\Bar{V}_{\cdot, 0}$ and $\Bar{V}_{0, \cdot}$ are the means of $V_{i,0}$ and $ V_{0,j}$ respectively. Further, $U_{n_{1},n_{2}}$ is the original statistics based on all observations, $ U_{n_{1}-1, n_{2}}^{-i,0}$ is the statistics after leaving $X_{i}$ out, for $i =1,\dots,n_{1}$ and $U_{n_{1}-1, n_{2}}^{0,-j}$ is the statistics after leaving $Y_{j}$ out, for $j =1,\dots,n_{2}$.

\begin{Lemma} {(See \cite{arv1969})}
\begin{enumerate}
    \item Assume that $E|h(X,Y)|< \infty$, then $U_{n_{1},n_{2}} \xrightarrow[]{~~~\text{a.s.}~~~} \theta$ as $n \longrightarrow \infty$.
    \item Assume that $E[h^{2}(X,Y)]<\infty, \sigma_{1,0}^{2} >0$ and $\sigma_{0,1}^{2}>0$, let $S^{2}_{n_{1},n_{2}} = \dfrac{1}{n_{1}}\sigma_{1,0}^{2} + \dfrac{1}{n_{2}}\sigma_{0,1}^{2}$, then
\begin{equation*}\label{}
\begin{aligned}
\dfrac{U_{n_{1},n_{2}} - \theta}{S_{n_{1},n_{2}}}  &\xrightarrow[]{~~~\text{d}~~~} N(0,1) \quad \text{and} \quad
\widehat{Var}_{\text{Jack}}\big(U_{n_{1},n_{2}}\big) - S^{2}_{n_{1},n_{2}} =  o_{p}(n_{1}^{-1})  \quad \text{as} \quad n_{1}\longrightarrow \infty.
\end{aligned}
\end{equation*} 
\end{enumerate}
\end{Lemma}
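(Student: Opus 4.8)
The plan is to establish Part~1 through the reverse-martingale structure of the two-sample $U$-statistic, and then to obtain both assertions of Part~2 from the Hoeffding (projection) decomposition of the degree-$(1,1)$ kernel $h$. For Part~1, I would regard $\{U_{n_1,n_2}\}$ as a reverse martingale with respect to the decreasing family of symmetric $\sigma$-fields generated by the unordered samples as $n_1,n_2$ grow. Under the single integrability hypothesis $E|h(X,Y)|<\infty$, the reverse-martingale convergence theorem yields almost-sure convergence of $U_{n_1,n_2}$ to a tail-measurable limit; since that limit is exchangeable, the Hewitt--Savage zero--one law forces it to be a.s.\ constant, and taking expectations identifies the constant as $\theta=E[h(X_1,Y_1)]$.

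For the central limit statement I would write the H-decomposition
\[
U_{n_1,n_2}-\theta \;=\; \frac{1}{n_1}\sum_{i=1}^{n_1}\big(h_{1,0}(X_i)-\theta\big) \;+\; \frac{1}{n_2}\sum_{j=1}^{n_2}\big(h_{0,1}(Y_j)-\theta\big) \;+\; R_{n_1,n_2},
\]
in which the first two sums are independent (the $X$- and $Y$-samples are independent), centered, with variances $n_1^{-1}\sigma_{1,0}^2$ and $n_2^{-1}\sigma_{0,1}^2$ adding to $S_{n_1,n_2}^2$, and $R_{n_1,n_2}$ is the completely degenerate remainder built from $g(x,y)=h(x,y)-h_{1,0}(x)-h_{0,1}(y)+\theta$. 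A direct second-moment computation, using that only the diagonal terms survive by double centering, gives $\mathrm{Var}(R_{n_1,n_2})=O\big((n_1 n_2)^{-1}\big)=o\big(S_{n_1,n_2}^2\big)$, so Chebyshev gives $R_{n_1,n_2}/S_{n_1,n_2}\xrightarrow{p}0$. Applying the Lindeberg--Lévy CLT to the normalized sum of the two independent linear pieces (finite variances from $Eh^2<\infty$ and positivity of $\sigma_{1,0}^2,\sigma_{0,1}^2$) and invoking Slutsky yields $(U_{n_1,n_2}-\theta)/S_{n_1,n_2}\xrightarrow{d}N(0,1)$.

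For the variance claim I would first simplify the pseudo-values. For the degree-$(1,1)$ kernel the leave-one-out algebra collapses to $V_{i,0}=n_2^{-1}\sum_{j=1}^{n_2}h(X_i,Y_j)$ and $V_{0,j}=n_1^{-1}\sum_{i=1}^{n_1}h(X_i,Y_j)$, with $\Bar{V}_{\cdot,0}=\Bar{V}_{0,\cdot}=U_{n_1,n_2}$. Thus $V_{i,0}$ is the empirical counterpart of $h_{1,0}(X_i)$, and
\[
\frac{1}{n_1(n_1-1)}\sum_{i=1}^{n_1}\big(V_{i,0}-\Bar{V}_{\cdot,0}\big)^2 \;=\; \frac{1}{n_1}\,\widehat{\sigma}_{1,0}^2,
\]
where $\widehat{\sigma}_{1,0}^2$ is the sample variance of the $V_{i,0}$ (and symmetrically $n_2^{-1}\widehat{\sigma}_{0,1}^2$ for the $Y$-block). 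Multiplying the target difference by $n_1$ reduces $\widehat{Var}_{\text{Jack}}(U_{n_1,n_2})-S_{n_1,n_2}^2=o_p(n_1^{-1})$ to showing $(\widehat{\sigma}_{1,0}^2-\sigma_{1,0}^2)+(n_1/n_2)(\widehat{\sigma}_{0,1}^2-\sigma_{0,1}^2)=o_p(1)$, where the bounded factor $n_1/n_2\to r$ is supplied by Assumption~3. I would then write $V_{i,0}-h_{1,0}(X_i)=n_2^{-1}\sum_j\{h(X_i,Y_j)-h_{1,0}(X_i)\}$, bound its contribution to the sample variance by a second-moment estimate of order $O_p(n_2^{-1})$, and apply the LLN to the (approximately i.i.d.)\ terms $h_{1,0}(X_i)$ to get $\widehat{\sigma}_{1,0}^2\xrightarrow{p}\sigma_{1,0}^2$.

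I expect the delicate point to be this last step: extracting the genuine $o_p(n_1^{-1})$ rate rather than mere consistency. The $V_{i,0}$ are not independent across $i$, since they share the common $Y$-sample, so the errors $V_{i,0}-h_{1,0}(X_i)$ are correlated; the argument hinges on a careful coupled second-moment bound across the two samples showing that, after the $n_1$ rescaling, this correlated fluctuation together with the recentering at $U_{n_1,n_2}$ still contributes only $o_p(1)$ uniformly as $\min(n_1,n_2)\to\infty$.
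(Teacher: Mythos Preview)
The paper does not prove this lemma; it is quoted verbatim with attribution to Arvesen (1969) and then used as a black box inside the proof of Theorem~2.1. There is therefore nothing in the paper to compare your argument against beyond the citation itself.

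That said, your sketch is the classical route to these two-sample $U$-statistic facts and is correct in outline: the reverse-martingale SLLN for Part~1, the Hoeffding projection plus Slutsky for the CLT, and the reduction of the jackknife-variance statement to $\widehat{\sigma}_{1,0}^2\xrightarrow{p}\sigma_{1,0}^2$ and $\widehat{\sigma}_{0,1}^2\xrightarrow{p}\sigma_{0,1}^2$ after multiplying through by $n_1$. Your identification of the degree-$(1,1)$ pseudo-values as $V_{i,0}=n_2^{-1}\sum_j h(X_i,Y_j)$ is exactly right and matches what the paper later computes for its specific kernel. The point you flag as delicate---the shared-$Y$ dependence among the $V_{i,0}$---is handled by the decomposition you propose: write $V_{i,0}=h_{1,0}(X_i)+\epsilon_i$ with $E[\epsilon_i^2\mid X_i]=n_2^{-1}\mathrm{Var}\big(h(X_i,Y_1)\mid X_i\big)$, so $n_1^{-1}\sum_i\epsilon_i^2=O_p(n_2^{-1})$, and Cauchy--Schwarz kills the cross term. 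This is precisely the argument in Arvesen's original paper, so your proposal is faithful to the cited source even though the present paper omits the details.
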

In order to apply JEL, using (\ref{A1}), we can determine
\begin{equation*}\label{}
\begin{aligned}
V_{i,0} = \dfrac{1}{n_{2}}X_{i}\,I(X_{i} \leq \psi_{t}) - \dfrac{1}{n_{2}} \sum_{r = 1}^{n_{2}}Y_{r}\, I(Y_{r}\leq \psi_{t}), \quad i = 1,\dots,n_{1}
\end{aligned}
\end{equation*} 
and
\begin{equation*}\label{}
\begin{aligned}
V_{0,j} =  \dfrac{1}{n_{1}}\sum_{s = 1}^{n_{1}}X_{s}\,I(X_{s} \leq \psi_{t})  - \dfrac{1}{n_{1}}Y_{j}\,I(Y_{j} \leq \psi_{t}), \quad j = 1,\dots,n_{2}
\end{aligned}
\end{equation*} 
Let $n = n_{1}+n_{2}$. Consider
\begin{equation*}\label{}
\begin{aligned}
U_{n} = \dfrac{1}{n_{1}n_{2}} \sum_{1 \leq i \leq{n_1} < j \leq n}  \bigg(X_{i}\,I(X_{i} \leq \psi_{t})-Y_{j-n_{1}}\,I(Y_{j-n_{1}}\leq \psi_{t})\bigg)  
\end{aligned}
\end{equation*} 
and 
\begin{equation*}\label{}
\begin{aligned}
U_{n}^{-i} & = U\big(Z_{1}, Z_{2},\dots,Z_{i-1},Z_{i+1},\dots,Z_{n}\big) \\
&= {n-1 \choose 2}^{-1} \dfrac{1}{n_{1}n_{2}}  \sum_{\substack{1 \leq  r <  s  \leq n \\ r,s \neq i}}  \bigg(X_{r}\,I(X_{r} \leq \psi_{t}) -Y_{s-n_{1}}\,I(Y_{s-n_{1}}\leq \psi_{t})\bigg)  \\
&= \begin{cases} 
      \dfrac{n}{(n-2)} \bigg[ U_{n} - \dfrac{1}{n_{1}n_{2}} \mathlarger{\sum}_{i\leq{n_1}<j} \bigg(X_{i}\, I(X_{i} \leq \psi_{t})-Y_{j-n_{1}}\,I(Y_{j-n_{1}}\leq \psi_{t})\bigg)   \bigg] &,~~ 1 \leq i \leq n_{1} \\ \\
      \dfrac{n}{(n-2)} \bigg[ U_{n} - \dfrac{1}{n_{1}n_{2}} \mathlarger{\sum}_{j\leq{n_1}<i} \bigg(X_{j}\, I(X_{j} \leq \psi_{t})-Y_{i-n_{1}}\,I(Y_{i-n_{1}}\leq \psi_{t})\bigg)  \bigg] &,~~ n_{1} < i \leq n  
\end{cases}
\end{aligned}
\end{equation*} 
It can be seen that

\begin{equation*}\label{}
\begin{aligned}
\dfrac{1}{n_{1}n_{2}} \sum_{i\leq{n_1}<j} \bigg(X_{i}\,I(X_{i} \leq \psi_{t}) -Y_{j-n_{1}}\,I(Y_{j-n_{1}}\leq \psi_{t})\bigg)  = \dfrac{1}{n_{1}} V_{i,0}~, \quad 1 \leq i \leq n_{1}  \\
\end{aligned}
\end{equation*} 
and
\begin{equation*}\label{}
\begin{aligned}
\dfrac{1}{n_{1}n_{2}} \sum_{j\leq{n_1}<i} \bigg(X_{j}\,I(X_{j} \leq \psi_{t})-Y_{i-n_{1}}\, I(Y_{i-n_{1}}\leq \psi_{t})\bigg)  = \dfrac{1}{n_{2}} V_{0,i}~, \quad n_{1} < i \leq n  \\
\end{aligned}
\end{equation*}

Now, consider JEL given in (\ref{eq10}), for $1 \leq k \leq n$, we have
\begin{equation*}\label{}
\begin{aligned}
\widehat{V}_{k} &= nU_{n} - (n-1)U_{n-1}^{-k} \\
& = \dfrac{n(n-1)}{n-2}\Bigg[ \Bigg(\dfrac{V_{k,0}}{n_{1}}\Bigg)I(1 \leq k \leq n_{1}) + \Bigg(\dfrac{V_{0,k-n_{1}}}{n_{2}}\Bigg) I(n_{1} < k \leq n) \bigg] - \dfrac{n}{n-2} U_{n_{1},n_{2}} \\ 
\end{aligned}
\end{equation*}
Thus,
\begin{equation*}\label{}
\begin{aligned}
E \widehat{V}_{k} = \dfrac{n\theta}{n-2} \Bigg[ \Bigg( \dfrac{n_{2}-1}{n_{1}} \Bigg)I(1 \leq k \leq n_{1}) + \Bigg(\dfrac{n_{1}-1}{n_{2}}\Bigg)I(n_{1} < k \leq n) \Bigg]
\end{aligned}
\end{equation*} 
Under $H_{0}$, $E \widehat{V}_{k} = 0$. Next, following the similar arguments given in \cite{jing2009}, for fixed $t = t_{0}\in[0,1]$, it can be proven that $\ell(\theta(t_{0})) \longrightarrow \chi^{2}_{1}$, as $n_{1} \longrightarrow \infty$ . Thus, details are omitted here.

\end{proof}

\begin{proof}\textbf{Theorem \ref{thm2}} \hfill \break
The proof of this theorem is similar to Theorem 1 given in \cite{chen2008}. Let $\lambda^{\textnormal{Adj}}(t)$ be the solution to
\begin{equation}\label{eq15}
\begin{aligned}
\sum_{k=1}^{n+1}\frac{g^{\textnormal{Adj}}_{k}(t)}{1 + \lambda^{\textnormal{Adj}}(t)g^{\textnormal{Adj}}_{k}(t)} = 0.
\end{aligned}
\end{equation}
The first step is to show that $\lambda^{\textnormal{Adj}}(t) = O_{p}(n^{-1/2})$. By using Lemma 3 of \cite{owen1990} and the fact that $E(\hat{V}_{1}^{2}(t)) < \infty$, we can establish that $g^{*} = \max_{1\leq k \leq n} \norm{\hat{V}_{k}} = o_{p}(n^{1/2})$ and $\Bar{g}_{n}(t) = O_{p}(n^{-1/2})$. Let $\rho = \norm{\lambda^{\textnormal{Adj}}(t)}$, $a_{n} = o_{p}(n)$ and $\hat{\lambda}^{\textnormal{Adj}}(t) = \lambda^{\textnormal{Adj}}(t)/\rho$.  Multiplying $\hat{\lambda}^{\textnormal{Adj}}(t)/n$ to both sides gives

\begin{equation}\label{eq16}
\begin{aligned}
0 &= \frac{\hat{\lambda}^{\textnormal{Adj}}(t)}{n} \sum_{k=1}^{n+1}\frac{g^{\textnormal{Adj}}_{k}(t)}{1 + \lambda^{\textnormal{Adj}}(t) g^{\textnormal{Adj}}_{k}(t)} \\
&= \frac{\hat{\lambda}^{\textnormal{Adj}}(t)}{n} \sum_{k=1}^{n+1}g^{\textnormal{Adj}}_{k}(t) - \dfrac{\rho}{n}  \sum_{k=1}^{n+1} \frac{(\hat{\lambda}^{\textnormal{Adj}}(t) g^{\textnormal{Adj}}_{k}(t))^{2}}{1 + \rho \hat{\lambda}^{\textnormal{Adj}}(t) g^{\textnormal{Adj}}_{k}(t) } \\
& \leq \hat{\lambda}^{\textnormal{Adj}}(t)\Bar{g}_{n}(t)(1 - a_{n}/n) - \frac{\rho}{n(1+ \rho g^{*}(t))} \sum_{k=1}^{n} (\hat{\lambda}^{\textnormal{Adj}}(t) g^{\textnormal{Adj}}_{k}(t))^{2} \\
& = \hat{\lambda}^{\textnormal{Adj}}(t)\Bar{g}_{n}(t) - \frac{\rho}{n(1+ \rho g^{*}(t))} \sum_{k=1}^{n} \big(\hat{\lambda}^{\textnormal{Adj}}(t) g^{\textnormal{Adj}}_{k}(t)\big)^{2} + O_{p}(n^{-3/2}a_{n}).
\end{aligned}
\end{equation}
The inequality stated above is valid due to the non-negativity of the $(n + 1)$th term in the second summation. According to \cite{chen2008}, for any given $\epsilon > 0$, we have
\begin{equation}\label{eq19}
\begin{aligned}
\frac{1}{n} \sum_{k=1}^{n}\big(\lambda^{\textnormal{Adj}}(t) g^{\textnormal{Adj}}_{k}(t)\big)^{2} \geq  1- \epsilon.
\end{aligned}
\end{equation}

Therefore, as long as $a_{n} = o_{p}(n)$, equation (\ref{eq16}) implies that

\begin{equation}\label{eq20}
\begin{aligned}
\frac{\rho}{(1+\rho g^{*}(t))} \leq \hat{\lambda}^{\textnormal{Adj}}(t) \frac{\Bar{g}_{n}(t)(t)}{(1- \epsilon)}  =  O_{p}(n^{-1/2}).
\end{aligned}
\end{equation}
Thus, we get $\rho = O_{p}(n^{-1/2})$ and hence $\lambda^{\textnormal{Adj}}(t) = O_{p}(n^{-1/2})$. Now, consider
\begin{equation}\label{eq21}
\begin{aligned}
0 &= \frac{1}{n} \sum_{k=1}^{n+1}\frac{g^{\textnormal{Adj}}_{k}(t)}{1+ \lambda^{\textnormal{Adj}}(t) g^{\textnormal{Adj}}_{k}(t)} \\
&= \Bar{g}_{n}(t)(t) - \lambda^{\textnormal{Adj}}(t)\hat{V}_{n}(t) + o_{p}(n^{-1/2}),
\end{aligned}
\end{equation}
where $\hat{V}_{n} = (1/n) \sum_{k=1}^{n} g^{\textnormal{Adj}}_{k}(t)^{2}$. Hence, when $n \longrightarrow \infty, \lambda^{\textnormal{Adj}}(t) = \hat{V}_{n}^{-1} \Bar{g}_{n}(t) + o_{p}(n^{-1/2})$. Now, we expand $l^{*}(\theta(t))$ as follows

\begin{equation}\label{eq22}
\begin{aligned}
l^{*}(\theta(t)) &= \sum_{k=1}^{n+1} \log \big(1 + \lambda^{\textnormal{Adj}}(t) g^{\textnormal{Adj}}_{k}(t)\big) \\
&=   \sum_{k=1}^{n+1} \bigg\{\lambda^{\textnormal{Adj}}(t) g^{\textnormal{Adj}}_{k}(t) - \frac{\big(\lambda^{\textnormal{Adj}}(t)  g^{\textnormal{Adj}}_{k}(t)\big)^{2}}{2} \bigg \} + o_{p}(1).
\end{aligned}
\end{equation}
Substituting the expansion of $\lambda^{\textnormal{Adj}}$, we get that

\begin{equation}\label{eq23}
\begin{aligned}
-2l^{*}(\theta(t_{0})) &= n \hat{V}_{n}^{-1}\Bar{g}_{n}(t)^{2} + o_{p}(1) \\
& \xrightarrow[]{d}\chi^{2}_{1}.
\end{aligned}
\end{equation}
This completes the proof.
\end{proof}

\end{appendices}

\section*{References}
\renewcommand{\section}[2]{}
\bibliographystyle{apacite}
\setcitestyle{authoryear, open={((},close={))}}
\bibliography{References}

\end{document}